\documentclass[a4paper,11pt]{article}
\usepackage[USenglish]{babel} %francais, polish, spanish, ...
\usepackage[T1]{fontenc}
\usepackage[ansinew]{inputenc}
\usepackage{lmodern}

\usepackage[margin=1in, paperwidth=8.4in, paperheight=11.5in]{geometry}

\usepackage{natbib}
\usepackage{fancyhdr}
\usepackage{lastpage}
\usepackage{textcomp}
\usepackage{xcolor}

\usepackage{amsmath}
\usepackage{amsthm}
\usepackage{amsfonts}
\usepackage{dsfont} % $\mathds{1}$
\usepackage{stmaryrd} % llbracket

\usepackage{pst-all} % pour le titre

\usepackage{graphicx}
\usepackage{multirow}
\usepackage[bookmarks = true, colorlinks = true]{hyperref}

\newtheorem{definition}{Definition}
\newtheorem{conclusion}{Conclusion}
\newtheorem{corollary}{Corollary}

\newtheorem{proposition}{Proposition}

\theoremstyle{definition}
\newtheorem{remark}{Remark}

\DeclareMathOperator*{\argmax}{arg\,max\,}
\DeclareMathOperator*{\argsup}{arg\,sup\,}
\DeclareMathOperator*{\arginf}{arg\,inf\,}

\begin{document}

\title{A Proximal Point Algorithm for Minimum Divergence Estimators with Application to Mixture Models}
\author{Diaa AL MOHAMAD\thanks{Corresponding author email: diaa.almohamad@gmail.com} \; \; \; Michel BRONIATOWSKI\thanks{email: michel.broniatowski@upmc.fr}\\
Laboratoire de Statistique Th\'eorique et Appliqu\'ee, Universit\'e Pierre et Marie Curie\\
4 place Jussieu 75005 PARIS}
\maketitle
\begin{abstract}
Estimators derived from a divergence criterion such as $\varphi-$divergences are generally more robust than the maximum likelihood ones. We are interested in particular in the so-called minimum dual $\phi-$divergence estimator (MD$\varphi$DE); an estimator built using a dual representation of $\varphi$--divergences. We present in this paper an iterative proximal point algorithm which permits to calculate such estimator. The algorithm contains by construction the well-known EM algorithm. Our work is based on the paper of \citep{Tseng} on the likelihood function. We provide some convergence properties by adapting the ideas of Tseng. We improve Tseng's results by relaxing the identifiability condition on the proximal term; a condition which is not verified for most mixture models and is hard to be verified for "non mixture" ones. Convergence of the EM algorithm in a two-component gaussian mixture is discussed in the spirit of our approach. Several experimental results on mixture models are provided to confirm the validity of the approach.
\end{abstract}
\textbf{Keywords:} $\varphi-$divergences, robust estimation, EM algorithm, proximal-point algorithms, mixture models.
%%%%%%%%%%%%%%%%%%%%%%%
%%%%%%%%%%%%%%%%%%%%%%%%%%%%%%%%%%%%%%%%%%
\section*{Introduction}
The EM algorithm is a well known method for calculating the maximum likelihood estimator of a model where incomplete data is considered. For example, when working with mixture models in the context of clustering, the labels or classes of observations are unknown during the training phase. Several variants of the EM algorithm were proposed, see \cite{McLachlanEM}. Another way to look at the EM algorithm is as a proximal point problem, see \cite{ChretienHero} and \cite{Tseng}. Indeed, one may rewrite the conditional expectation of the complete log-likelihood as a sum of the log-likelihood function and a distance-like function over the conditional densities of the labels provided an observation. Generally, the proximal term has a regularization effect in the sense that a proximal point algorithm is more stable and frequently outperforms classical optimization algorithms, see \cite{Goldstein}. Chr\'etien and Hero \cite{ChretienHeroAccel} prove superlinear convergence of a proximal point algorithm derived by the EM algorithm. Notice that EM-type algorithms usually enjoy no more than linear convergence.\\
Taking into consideration the need for robust estimators, and the fact that the MLE is the least robust estimator among the class of divergence-type estimators which we present below, we generalize the EM algorithm (and the version in \cite{Tseng}) by replacing the log-likelihood function by an estimator of a $\varphi-$divergence between the \emph{true distribution} of the data and the model. A
$\varphi-$divergence in the sense of Csisz\'{a}r \cite{Csiszar} is defined in the same way as \cite{BroniatowskiKeziou2007} by:
\[D_{\varphi}(Q,P) = \int{\varphi\left(\frac{dQ}{dP}(y)\right)dP(y)},\]
where $\varphi$ is a nonnegative strictly convex function. Examples of such divergences are: the Kullback-Leibler (KL) divergence for $\varphi(t)=t\log(t)-t+1$, the modified KL divergence for $\varphi(t)=-\log(t)+t-1$, the hellinger distance for $\varphi(t) = \frac{1}{2}(\sqrt{t}-1)$ among others. All these well-known divergences belong to the class of Cressie-Read functions \cite{CressieRead1984} defined by 
\begin{equation}
\varphi_{\gamma}(t) = \frac{x^{\gamma}-\gamma x + \gamma -1}{\gamma (\gamma -1)}  \text{ for } \gamma\in\mathbb{R}\setminus\{0,1\}.\label{eqn:CressieReadPhi}
\end{equation}
Since the $\varphi-$divergence calculus uses the unknown true distribution, we need to estimate it. We consider the dual estimator of the divergence introduced independently by \cite{BroniaKeziou2006} and \cite{LieseVajdaDivergence}. The use of this estimator is motivated by many reasons. Its minimum coincides with the MLE for $\varphi(t)=-\log(t)+t-1$. Besides, it has the same form for discrete and continuous models, and does not consider any partitioning or smoothing.\\
Let $(P_{\phi})_{\phi\in\Phi}$ be a parametric model with $\Phi\subset\mathbb{R}^d$, and denote $\phi^T$ the \emph{true} set of parameters. Let $dy$ be the Lebesgue measure defined on $\mathbb{R}$. Suppose that $\forall \phi\in\Phi$, the probability measure $P_{\phi}$ is absolutely continuous with respect to $dy$ and denote $p_{\phi}$ the corresponding probability density.
The dual estimator of the $\varphi-$divergence given an $n-$sample $y_1,\cdots,y_n$ is given by:
\begin{equation}
\hat{D}_{\varphi}(p_{\phi},p_{\phi_T}) = \sup_{\alpha\in\Phi}\int{\varphi'\left(\frac{p_{\phi}}{p_{\alpha}}\right)(x)p_{\phi}(x)dx} - \frac{1}{n}\sum_{i=1}^n{\varphi^{\#}\left(\frac{p_{\phi}}{p_{\alpha}}\right)(y_i)},
\label{eqn:DivergenceDef}
\end{equation}
with $\varphi^{\#}(t)=t\varphi'(t)-\varphi(t)$. AL Mohamad \cite{Diaa} argues that this formula works well under the model, however, when we are not, this quantity largely underestimates the divergence between the true distribution and the model, and proposes following modification:
\begin{equation}
\tilde{D}_{\varphi}(p_{\phi},p_{\phi_T}) = \int{\varphi'\left(\frac{p_{\phi}}{K_{n,w}}\right)(x)p_{\phi}(x)dx} - \frac{1}{n}\sum_{i=1}^n{\varphi^{\#}\left(\frac{p_{\phi}}{K_{n,w}}\right)(y_i)},
\label{eqn:NewDivergenceDef}
\end{equation}
where $K_{n,w}$ is the Rosenblatt-Parzen kernel estimate with window parameter $w$. Whether it is $\hat{D}_{\varphi}$, or $\tilde{D}_{\varphi}$, the minimum dual $\varphi-$divergence estimator (MD$\varphi$DE) is defined as the argument of the infimum of the dual approximation:
\begin{eqnarray}
\hat{\phi}_n & = & \arginf_{\phi\in\Phi} \hat{D}_{\varphi}(p_{\phi},p_{\phi_T}),
\label{eqn:MDphiDEClassique} \\
\tilde{\phi}_n & = & \arginf_{\phi\in\Phi} \tilde{D}_{\varphi}(p_{\phi},p_{\phi_T}).
\label{eqn:NewMDphiDE}
\end{eqnarray}
Asymptotic properties and consistency of these two estimators can be found in \cite{BroniatowskiKeziou2007} and \cite{Diaa}. Robustness properties were also studied using the influence function approach in \cite{TomaBronia} and \cite{Diaa}. The kernel-based MD$\varphi$DE (\ref{eqn:NewMDphiDE}) seems to be a \emph{better} estimator than the classical MD$\varphi$DE (\ref{eqn:MDphiDEClassique}) in the sense that the former is robust whereas the later is generally not. Under the model, the estimator given by (\ref{eqn:MDphiDEClassique}) is, however, more efficient especially when the true density of the data is unbounded\footnote{More investigation is needed here since we may use asymmetric kernels to overcome this difficulty.}, see \cite{Diaa} for a brief comparison. \\
Here in this paper, we propose to calculate the MD$\varphi$DE using an iterative procedure based on the work of \cite{Tseng} on the log-likelihood function. This procedure has the form of a proximal point algorithm, and extends the EM algorithm. Our convergence proof demands some regularity of the estimated divergence with respect to the parameter vector which is not simply checked using (\ref{eqn:DivergenceDef}). Recent results in the book of Rockafellar and Wets \cite{Rockafellar} provide sufficient conditions to solve this problem. Differentiability with respect to $\phi$ still remains a very hard task, therefore, our results cover cases when the objective function is not differentiable.\\
The paper is organized as follows: In Sect. 1, we present the general context. We also present the derivation of our algorithm from the EM algorithm and passing by Tseng's generalization. In Sect. 2, we present some convergence properties. We discuss in Sect. 3 a variant of the algorithm with a theoretical global infimum, and an exambple on the two-gaussian mixture model and a convergence proof of the EM algorithm in the spirit of our approach. Finally, Sect. 4 contains simulations confirming our claim about the efficiency and the robustness of our approach in comparison with the MLE. The algorithm is also applied to the so-called MDPD introduced by \cite{BasuMPD}.
%--------------------------------------------------------------------------------------
%--------------------------------------------------------------------------------------
%--------------------------------------------------------------------------------------
\section{A Description of the Algorithm}
\subsection{General Context and Notations}
Let $(X,Y)$ be a couple of random variables with joint probability density function $f(x,y|\phi)$ parametrized by a vector of parameters $\phi\in\Phi\subset\mathbb{R}^d$. Let $(X_1,Y_1),\cdots,$ $(X_n,Y_n)$ be $n$ copies of $(X,Y)$ independently and identically distributed. Finally, let $(x_1,y_1),\cdots,(x_n,y_n)$ be $n$ realizations of the $n$ copies of $(X,Y)$. The $x_i$'s are the unobserved data (labels) and the $y_i$'s are the observations. The vector of parameters $\phi$ is unknown and need to be estimated. The observed data $y_i$ are supposed to be real numbers, and the labels $x_i$ belong to a space $\mathcal{X}$ not necessarily finite unless mentioned otherwise. The marginal density of the observed data is given by $p_{\phi}(y)=\int{f(x,y|\phi)}dx$, where $dx$ is a measure defined on the label space (for example, the counting measure if we work with mixture models).\\
For a parametrized function $f$ with a parameter $a$, we write $f(x|a)$. We use the notation $\phi^k$ for sequences with the index above. Derivatives of a real valued function $\psi$ defined on $\mathbb{R}$ are written as $\psi',\psi'',$ etc. We use $\nabla f$ for the gradient of a real function $f$ defined on $\mathbb{R}^d$, and $J_f$ for the matrix of second order partial derivatives. For a generic function of two (vectorial) arguments $D(\phi|\theta)$, then $\nabla_1 D(\phi|\theta)$ denotes the gradient with respect to the first (vectorial) variable. Finally, for any set $A$, we use $int(A)$ to denote the interior of $A$. 
%%%%%%%%%%%%%%%%%%%%%%%%%%%%%%%%%%%%%%%%%%%%%%%%%%%%%
\subsection{EM Algorithm and Tseng's Generalization}
\noindent The EM algorithm estimates the unknown parameter vector by (see \cite{Dempster}):
\[\phi^{k+1} = \argmax_{\Phi} \mathbb{E}\left[\log(f(\textbf{X},\textbf{Y}|\phi)) \left| \textbf{Y}=\textbf{y},\phi^k\right.\right].\]
where $\textbf{X} = (X_1,\cdots,X_n)$, $\textbf{Y} = (Y_1,\cdots,Y_n)$ and $\textbf{y}=(y_1,\cdots,y_n)$. By independence between the couples $(X_i,Y_i)$'s, previous iteration may be written as:
\begin{eqnarray}
\phi^{k+1} & = & \argmax_{\Phi} \sum_{i=1}^n{\mathbb{E}\left[\log(f(X_i,Y_i|\phi)) \left| Y_i=y_i,\phi^k\right.\right]} \nonumber\\
					 & = & \argmax_{\Phi} \sum_{i=1}^n\int_{\mathcal{X}}{\log(f(x,y_i|\phi)) h_i(x|\phi^k) dx},
\label{eqn:EMAlgo}
\end{eqnarray}
where $h_i(x|\phi^k)=\frac{f(x,y_i|\phi^k)}{p_{\phi^k}(y_i)}$ is the conditional density of the labels (at step $k$) provided $y_i$ which we suppose to be positive $dx-$almost everywhere. It is well-known that the EM iterations can be rewritten as a difference between the log-likelihood and a \emph{Kullback-Liebler} distance-like function. Indeed,
\begin{eqnarray*}
\phi^{k+1} & = & \argmax_{\Phi} \sum_{i=1}^n\int_{\mathcal{X}}{\log\left(h_i(x|\phi)\times p_{\phi}(y_i)\right) h_i(x|\phi^k) dx} \\
 					 & = & \argmax_{\Phi} \sum_{i=1}^n\int_{\mathcal{X}}{\log\left(p_{\phi}(y_i)\right) h_i(x|\phi^k) dx} + \sum_{i=1}^n\int_{\mathcal{X}}{\log\left(h_i(x|\phi)\right) h_i(x|\phi^k) dx} \\
           & = & \argmax_{\Phi} \sum_{i=1}^n{\log\left(p_{\phi}(y_i)\right)} + \sum_{i=1}^n\int_{\mathcal{X}}{\log\left(\frac{h_i(x|\phi)}{h_i(x|\phi^k)}\right) h_i(x|\phi^k) dx}\\
					& & \qquad \qquad \qquad \qquad + \sum_{i=1}^n\int_{\mathcal{X}}{\log\left(h_i(x|\phi^k)\right) h_i(x|\phi^k) dx}.
\end{eqnarray*}
The final line is justified by the fact that $h_i(x|\phi)$ is a density, therefore it integrates to 1. The additional term does not depend on $\phi$ and, hence, can be omitted. We now have the following iterative procedure:
\[\phi^{k+1} = \argmax_{\Phi} \sum_{i=1}^n{\log\left(g(y_i|\phi)\right)} + \sum_{i=1}^n\int_{\mathcal{X}}{\log\left(\frac{h_i(x|\phi)}{h_i(x|\phi^k)}\right) h_i(x|\phi^k) dx}.\]
The previous iteration has the form of a proximal point maximization of the log-likelihood, i.e. a perturbation of the log-likelihood by a distance-like function defined on the conditional densities of the labels. Tseng \cite{Tseng} generalizes this iteration by allowing any nonnegative convex function $\psi$ to replace the $t\mapsto-\log(t)$ function. Tseng's recurrence is defined by:
\begin{equation}
\phi^{k+1} = \argsup_{\phi} J(\phi) - D_{\psi}(\phi,\phi^k),
\label{eqn:TsengAlgo}
\end{equation}
where $J$ is the log-likelihood function and $D_{\psi}$ is given by:
\begin{equation}
D_{\psi}(\phi,\phi^k) = \sum_{i=1}^n\int_{\mathcal{X}}{\psi\left(\frac{h_i(x|\phi)}{h_i(x|\phi^k)}\right)h_i(x|\phi^k)dx},
\label{eqn:DivergenceClassesNtNorm}
\end{equation}
for any real nonnegative convex function $\psi$ such that $\psi(1)=\psi'(1)=0$. $D_{\psi}(\phi_1,\phi_2)$ is nonnegative, and $D_{\psi}(\phi_1,\phi_2)=0$ if and only if $\forall i, h_i(x|\phi_1) = h_i(x|\phi_2)$ $dx-$almost everywhere.\\

%--------------------------------------------------------------------------------------
% =======================================================================================
%--------------------------------------------------------------------------------------
\subsection{Generalization of Tseng's Algorithm}\label{subsec:OurAlgo}
We use the relation between maximizing the log-likelihood and minimizing the Kullback-Liebler divergence to generalize the previous algorithm. We, therefore, replace the log-likelihood function by an estimate of a $\varphi-$divergence $D_{\varphi}$ between the true distribution and the model. We use the dual estimator of the divergence presented earlier in the introduction (\ref{eqn:DivergenceDef}) or (\ref{eqn:NewDivergenceDef}) which we denote in the same manner $\hat{D}_{\varphi}$ unless mentioned otherwise. Our new algorithm is defined by:
\begin{equation}
\phi^{k+1} = \arginf_{\phi} \hat{D}_{\varphi}(p_{\phi},p_{\phi_T}) + \frac{1}{n}D_{\psi}(\phi,\phi^k),
\label{eqn:DivergenceAlgoPreVersion}
\end{equation}
where $D_{\psi}(\phi,\phi^k)$ is defined by (\ref{eqn:DivergenceClassesNtNorm}). When $\varphi(t) = -\log(t)+t-1$, it is easy to see that we get recurrence (\ref{eqn:TsengAlgo}). Indeed, for the case of (\ref{eqn:DivergenceDef}) we have:
\[\hat{D}_{\varphi}(p_{\phi},p_{\phi_T}) = \sup_{\alpha} \frac{1}{n}\sum_{i=1}^n{\log(p_{\alpha}(y_i))} - \frac{1}{n}\sum_{i=1}^n{\log(p_{\phi}(y_i))}.\]
Using the fact that the first term in $\hat{D}_{\varphi}(p_{\phi},p_{\phi_T})$ does not depend on $\phi$, so it does not count in the $\arginf$ defining $\phi^{k+1}$, we easily get (\ref{eqn:TsengAlgo}). The same applies for the case of (\ref{eqn:NewDivergenceDef}). For notational simplicity, from now on, we redefine $D_{\psi}$ with a normalization by $n$, i.e. 
\begin{equation}
D_{\psi}(\phi,\phi^k) = \frac{1}{n} \sum_{i=1}^n\int_{\mathcal{X}}{\psi\left(\frac{h_i(x|\phi)}{h_i(x|\phi^k)}\right)h_i(x|\phi^k)dx}.
\label{eqn:DivergenceClasses}
\end{equation}
Hence, our set of algorithms is redefined by:
\begin{equation}
\phi^{k+1} = \arginf_{\phi} \hat{D}_{\varphi}(p_{\phi},p_{\phi_T}) + D_{\psi}(\phi,\phi^k).
\label{eqn:DivergenceAlgo}
\end{equation}
We will see later that this iteration forces the divergence to decrease and that under suitable conditions, it converges to a (local) minimum of $\hat{D}_{\varphi}(p_{\phi},p_{\phi_T})$. It results that, algorithm (\ref{eqn:DivergenceAlgo}) is a way to calculate both the MD$\varphi$DE (\ref{eqn:MDphiDEClassique}) and the kernel-based MD$\varphi$DE (\ref{eqn:NewMDphiDE}). 
%%%%%%%%%%%%%%%%%%%%%%%%%%%%%%%%%%%%%%%%%%%%%%%%%%%%%%%%%%%%%%%%%%
% ===================================================================
%%%%%%%%%%%%%%%%%%%%%%%%%%%%%%%%%%%%%%%%%%%%%%%%%%%%%%%%%%%%%%%%%%
\section{Some Convergence Properties of \texorpdfstring{$\phi^k$}{phi k}}\label{sec:Proofs}
We show here how, according to some possible situations, one may prove convergence of the algorithm defined by (\ref{eqn:DivergenceAlgo}). Let $\phi^0$ be a given initialization, and let 
\[\Phi^0 := \{\phi\in\Phi: \hat{D}_{\varphi}(p_{\phi},p_{\phi_T})\leq \hat{D}_{\varphi}(p_{\phi^0},p_{\phi_T})\}\] 
which we suppose to be a subset of $int(\Phi)$. The idea of defining such a set in this context is inherited from the paper of \citep{Wu} which provided the first \emph{correct proof} of convergence for the EM algorithm. Before going any further, we recall the following Definition of a (generalized) stationary point.\\ 
\begin{definition}
Let $f:\mathbb{R}^d\rightarrow\mathbb{R}$ be a real valued function. If $f$ is differentiable at a point $\phi^*$ such that $\nabla f(\phi^*)=0$, we then say that $\phi^*$ is a stationary point of $f$. If $f$ is not differentiable at $\phi^*$ but the subgradient of $f$at $\phi^*$, say $\partial f(\phi^*)$, exists such that $0\in\partial f(\phi^*)$, then $\phi^*$ is called a generalized stationary point of $f$.
\end{definition}
We will be using the following assumptions:
\begin{itemize}
\item[A0.] Functions $\phi\mapsto\hat{D}_{\varphi}(p_{\phi}|p_{\phi_T}), D_{\psi}$ are lower semicontinuous.
\item[A1.] Functions $\phi\mapsto\hat{D}_{\varphi}(p_{\phi}|p_{\phi_T}), D_{\psi}$ and $\nabla_1 D_{\psi}$ are defined and continuous on, respectively, $\Phi, \Phi\times\Phi$ and $\Phi\times\Phi$;
\item[AC.] $\phi\mapsto\nabla \hat{D}_{\varphi}(p_{\phi}|p_{\phi_T})$ is defined and continuous on $\Phi$
\item[A2.] $\Phi^0$ is a compact subset of int$(\Phi)$;
\item[A3.] $D_{\psi}(\phi,\bar{\phi})>0$ for all $\bar{\phi}\neq \phi \in \Phi$.
\end{itemize}
Recall also that we suppose that $h_i(x|\phi)>0, dx-a.e.$ We relax the convexity assumption of function $\psi$. We only suppose that $\psi$ is nonnegative and $\psi(t)=0$ iff $t=1$. Besides $\psi'(t)=0$ if $t = 1$.\\
Continuity and differentiability assumptions of function $\phi\mapsto \hat{D}_{\varphi}(p_{\phi}|p_{\phi_T})$ for the case of (\ref{eqn:NewDivergenceDef}) can be easily checked using Lebesgue theorems. Continuity assumption for the case of (\ref{eqn:DivergenceDef}) can be checked using Theorem 1.17 or Corollary 10.14 in \cite{Rockafellar}. Differentiability can also be checked using Corollary 10.14 or Theorem 10.31 in the same book. In what concerns $D_{\psi}$, continuity and differentiability can be obtained merely by fulfilling Lebesgue theorems conditions. When working with mixture models, we only need the continuity and differentiability of $\psi$ and functions $h_i$. The later is easily deduced from regularity assumptions on the model. For assumption A2, there is no universal method, see paragraph (\ref{Example:GaussMix}) for an Example. Assumption A3 can be checked using Lemma 2 in \cite{Tseng}.\\
We start the convergence properties by proving that the objective function $\hat{D}_{\varphi}(p_{\phi}|p_{\phi_T})$ decreases alongside the the sequence $(\phi^k)_k$, and a possible set of conditions for the existence of the sequence $(\phi^k)_k$. \\
%%%%%%%%%%% Proposition 1
\begin{proposition} \label{prop:DecreaseDphi}
 (a)Assume that the sequence $(\phi^k)_k$ is well defined in $\Phi$, then $\hat{D}_{\varphi}(p_{\phi^{k+1}}|p_{\phi_T})\leq \hat{D}_{\varphi}(p_{\phi^k}|p_{\phi_T})$, and (b) $\forall k, \phi^k \in \Phi^0$. (c) Assume A0 and A2 are verified, then the sequence $(\phi^k)_k$ is defined and bounded. Moreover, the sequence $(\hat{D}_{\varphi}(p_{\phi^k}|p_{\phi^T}))_k$ converges.
\end{proposition}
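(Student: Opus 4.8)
The plan rests on the optimality that defines each iterate in (\ref{eqn:DivergenceAlgo}), combined with the two structural facts about $D_\psi$ recalled in Subsection~\ref{subsec:OurAlgo}: that $D_\psi(\phi,\bar\phi)\ge 0$ for all $\phi,\bar\phi$, and that $D_\psi(\bar\phi,\bar\phi)=0$ (because $\psi(1)=0$ and each conditional density $h_i(\cdot|\bar\phi)$ integrates to one). For part (a), I would just compare the value of the objective in (\ref{eqn:DivergenceAlgo}) at the minimizer $\phi^{k+1}$ with its value at the feasible point $\phi^k$:
$$\hat{D}_{\varphi}(p_{\phi^{k+1}},p_{\phi_T})+D_\psi(\phi^{k+1},\phi^k)\ \le\ \hat{D}_{\varphi}(p_{\phi^k},p_{\phi_T})+D_\psi(\phi^k,\phi^k)=\hat{D}_{\varphi}(p_{\phi^k},p_{\phi_T}),$$
and then drop the nonnegative term $D_\psi(\phi^{k+1},\phi^k)$ on the left. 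For part (b), I would argue by induction: $\phi^0\in\Phi^0$ holds because the defining inequality of $\Phi^0$ becomes an equality there, and if $\phi^k\in\Phi^0$ then (a) gives $\hat{D}_{\varphi}(p_{\phi^{k+1}},p_{\phi_T})\le\hat{D}_{\varphi}(p_{\phi^k},p_{\phi_T})\le\hat{D}_{\varphi}(p_{\phi^0},p_{\phi_T})$, so $\phi^{k+1}\in\Phi^0$.

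For part (c) the heart of the matter is showing that the per-step minimization is actually attained. Given $\phi^k\in\Phi^0$, I would note that the objective $\phi\mapsto\hat{D}_{\varphi}(p_{\phi},p_{\phi_T})+D_\psi(\phi,\phi^k)$ equals $\hat{D}_{\varphi}(p_{\phi^k},p_{\phi_T})$ at $\phi=\phi^k$, so its infimum over $\Phi$ is at most $\hat{D}_{\varphi}(p_{\phi^k},p_{\phi_T})$; moreover any $\phi$ at which the objective does not exceed $\hat{D}_{\varphi}(p_{\phi^k},p_{\phi_T})$ must, by $D_\psi\ge 0$ together with (b), satisfy $\hat{D}_{\varphi}(p_{\phi},p_{\phi_T})\le\hat{D}_{\varphi}(p_{\phi^0},p_{\phi_T})$, i.e. $\phi\in\Phi^0$. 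Hence the infimum over $\Phi$ coincides with the infimum over the compact set $\Phi^0$ (A2), on which the objective is lower semicontinuous by A0, so it attains its minimum there; that minimizer lies in $\Phi^0$ and is taken as $\phi^{k+1}$. An induction starting at $\phi^0$ then makes the whole sequence well defined, and by construction (equivalently, by (b)) it is contained in the compact $\Phi^0$, hence bounded. Finally $(\hat{D}_{\varphi}(p_{\phi^k},p_{\phi_T}))_k$ is non-increasing by (a) and bounded below, since a lower semicontinuous function on the compact $\Phi^0$ attains a finite minimum there; a monotone bounded real sequence converges.

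I expect the only genuinely delicate point to be this reduction of the per-step minimization from the possibly non-compact $\Phi$ to the compact sublevel set $\Phi^0$: one has to exclude the possibility of an infimizing sequence escaping $\Phi^0$, and this is exactly where the nonnegativity of the proximal term $D_\psi$ and the sublevel-set definition of $\Phi^0$ are used together. Once that is in place, the Weierstrass extreme value theorem under A0 and A2 supplies the minimizer, and the remaining items — the monotonicity identity, the two inductions, and the bounded-monotone convergence argument — are routine.
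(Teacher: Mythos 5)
Your proposal is correct and follows essentially the same route as the paper's own proof: the comparison at the feasible point $\phi^k$ with $D_\psi(\phi^k,\phi^k)=0$ for (a), the induction via the sublevel-set definition of $\Phi^0$ for (b), and for (c) the reduction of the per-step infimum from $\Phi$ to the compact $\Phi^0$ (using nonnegativity of $D_\psi$) followed by lower semicontinuity on a compact set to attain the minimizer and the monotone-bounded argument for convergence of $(\hat{D}_{\varphi}(p_{\phi^k},p_{\phi_T}))_k$. No gaps to report.
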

\begin{proof}
\underline{We prove $(a)$}. we have by Definition of the arginf:
\[\hat{D}_{\varphi}(p_{\phi^{k+1}},p_{\phi_T}) + D_{\psi}(\phi^{k+1},\phi^k) \leq \hat{D}_{\varphi}(p_{\phi^k},p_{\phi_T}) + D_{\psi}(\phi^k,\phi^k).\]
We use the fact that $D_{\psi}(\phi^k,\phi^k)=0$ for the right hand and that $D_{\psi}(\phi^{k+1},\phi^k)\geq 0$ for the left hand side of the previous inequality. Hence $\hat{D}_{\varphi}(p_{\phi^{k+1}},p_{\phi_T})\leq \hat{D}_{\varphi}(p_{\phi^k},p_{\phi_T})$.\\
%%%%%%%%%%%%%%%%%%%%%%%%%%%%%%%%%%%%%%
\underline{We prove $(b)$}. Using the decreasing property previously proved in (a), we have by recurrence $\forall k, \hat{D}_{\varphi}(p_{\phi^{k+1}},p_{\phi_T})\leq \hat{D}_{\varphi}(p_{\phi^k},p_{\phi_T})\leq\cdots \leq \hat{D}_{\varphi}(p_{\phi^0},p_{\phi_T})$. The result follows for both algorithms directly by Definition of $\Phi^0$.\\
%%%%%%%%%%%%%%%%%%%%%%%%%%%%%%%%%%%%%%%
\underline{We prove $(c)$}. By induction on $k$. For $k=0$, clearly $\phi^0 = (\lambda^0,\theta^0)$ is well defined (a choice we make\footnote{The choice of the initial point of the sequence may influence the convergence of the sequence. See the Example of the Gaussian mixture in paragraph (\ref{Example:GaussMix}).}). Suppose for some $k\geq 0$ that $\phi^k = (\lambda^k,\theta^k)$ exists. We prove that the infimum is attained in $\Phi^0$. Let $\phi\in\Phi$ be any vector at which the value of the optimized function has a value less than its value at $\phi^k$, i.e. $\hat{D}_{\varphi}(p_{\phi},p_{\phi_T}) + D_{\psi}(\phi,\phi^k)\leq \hat{D}_{\varphi}(p_{\phi^k},p_{\phi_T}) + D_{\psi}(\phi^k,\phi^k)$. We have:
\begin{eqnarray*}
\hat{D}_{\varphi}(p_{\phi},p_{\phi_T}) & \leq & \hat{D}_{\varphi}(p_{\phi},p_{\phi_T}) + D_{\psi}(\phi,\phi^k) \\ 
& \leq & \hat{D}_{\varphi}(p_{\phi^k},p_{\phi_T}) + D_{\psi}(\phi^k,\phi^k) \\
  & \leq & \hat{D}_{\varphi}(p_{\phi^k},p_{\phi_T}) \\
  & \leq & \hat{D}_{\varphi}(p_{\phi^0},p_{\phi_T}).
\end{eqnarray*}
The first line follows from the non negativity of $D_{\psi}$. As $\hat{D}_{\varphi}(p_{\phi},p_{\phi_T})\leq \hat{D}_{\varphi}(p_{\phi^0},p_{\phi_T})$, then $\phi\in\Phi^0$. Thus, the infimum can be calculated for vectors in $\Phi^0$ instead of $\Phi$. Since $\Phi^0$ is compact and the optimized function is lower semicontinuous (the sum of two lower semicontinuous functions), then  the infimum exists and is attained in $\Phi^0$. We may now define $\phi^{k+1}$ to be a vector whose corresponding value is equal to the infimum.\\
Convergence of the sequence $(\hat{D}_{\varphi}(p_{\phi^k},p_{\phi_T}))_k$ comes from the fact that it is non increasing and bounded. It is non increasing by virtue of (a). Boundedness comes from the lower semicontinuity of $\phi\mapsto\hat{D}_{\varphi}(p_{\phi},p_{\phi_T})$. Indeed, $\forall k, \hat{D}_{\varphi}(p_{\phi^k},p_{\phi_T}) \geq \inf_{\phi\in\Phi^0}\hat{D}_{\varphi}(p_{\phi},p_{\phi_T})$. The infimum of a proper lower semicontinuous function on a compact set exists and is attained on this set. Hence, the quantity $\inf_{\phi\in\Phi^0}\hat{D}_{\varphi}(p_{\phi},p_{\phi_T})$ exists and is finite. This ends the proof.
\end{proof}
Compactness in part (c) can be replaced by inf-compactness of function $\phi\mapsto \hat{D}_{\varphi}(p_{\phi}|p_{\phi_T})$ and continuity of $D_{\psi}$ with respect to its first argument. The convergence of the sequence $(\hat{D}_{\varphi}(\phi^k|\phi_T))_k$ is an interesting property, since in general there is no theoretical guarantee, or it is difficult to prove that the whole sequence $(\phi^k)_k$ converges. It may also continue to fluctuate around a minimum. The decrease of the error criterion $\hat{D}_{\varphi}(\phi^k|\phi_T)$ between two iterations helps us decide when to stop the iterative procedure.
%%%%%%%%%%%% Proposition 2
\begin{proposition} \label{prop:StationaryPhiDiff}
 Suppose A1 verified, $\Phi^0$ is closed and $\{\phi^{k+1}-\phi^k\}\rightarrow 0$.
\begin{itemize}
\item[(a)] If AC is verified, then any limit point of $(\phi^k)_k$ is a stationary point of $\phi\mapsto\hat{D}_{\varphi}(p_{\phi}|p_{\phi^T})$;
\item[(b)] If AC is dropped,  then any limit point of $(\phi^k)_k$ is a "generalized" stationary point of $\phi\mapsto\hat{D}_{\varphi}(p_{\phi}|p_{\phi^T})$, i.e. zero belongs to the subgradient of $\phi\mapsto\hat{D}_{\varphi}(p_{\phi}|p_{\phi^T})$ calculated at the limit point.
\end{itemize}
\end{proposition}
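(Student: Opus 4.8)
The plan is to use the first-order optimality (Fermat) condition satisfied by each iterate $\phi^{k+1}$ as a minimizer over $\Phi$ of $\phi\mapsto\hat{D}_{\varphi}(p_{\phi}|p_{\phi_T})+D_{\psi}(\phi,\phi^k)$, and then pass to the limit along a convergent subsequence, exploiting $\phi^{k+1}-\phi^k\to0$ to collapse $\phi^{k_j}$ and $\phi^{k_j-1}$ onto the same limit $\phi^*$. The first ingredient I would isolate is the elementary identity $\nabla_1 D_{\psi}(\bar\phi,\bar\phi)=0$ for every $\bar\phi\in\Phi$: differentiating (\ref{eqn:DivergenceClasses}) under the integral sign (legitimate under A1) gives
\[
\nabla_1 D_{\psi}(\phi,\bar\phi)=\frac{1}{n}\sum_{i=1}^n\int_{\mathcal X}\psi'\!\left(\frac{h_i(x|\phi)}{h_i(x|\bar\phi)}\right)\nabla_1 h_i(x|\phi)\,dx ,
\]
where the density factor $h_i(x|\bar\phi)$ has cancelled; setting $\phi=\bar\phi$ makes the argument of $\psi'$ equal to $1$, and since $\psi'(1)=0$ the whole expression vanishes. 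The second ingredient is that, by Proposition~\ref{prop:DecreaseDphi}, each $\phi^{k+1}$ lies in $\Phi^0\subseteq\mathrm{int}(\Phi)$ and realizes the infimum over $\Phi$ of the objective; being attained at an interior point, that infimum makes Fermat's rule applicable there.

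For part (a), AC makes the objective differentiable, so Fermat's rule reads $\nabla\hat{D}_{\varphi}(p_{\phi^{k+1}}|p_{\phi_T})+\nabla_1 D_{\psi}(\phi^{k+1},\phi^k)=0$ for every $k$. Let $\phi^*$ be a limit point, say $\phi^{k_j}\to\phi^*$; since $\phi^{k+1}-\phi^k\to0$ we also have $\phi^{k_j-1}\to\phi^*$. Writing the previous identity with $k=k_j-1$ and letting $j\to\infty$, continuity of $\nabla\hat{D}_{\varphi}$ (AC) and of $\nabla_1 D_{\psi}$ (A1) yield $\nabla\hat{D}_{\varphi}(p_{\phi^*}|p_{\phi_T})+\nabla_1 D_{\psi}(\phi^*,\phi^*)=0$; the second term is zero by the identity above, and closedness of $\Phi^0$ guarantees $\phi^*\in\Phi^0\subseteq\mathrm{int}(\Phi)$ so the stationarity equation is meaningful. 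Hence $\nabla\hat{D}_{\varphi}(p_{\phi^*}|p_{\phi_T})=0$.

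For part (b) we drop AC: the objective is only known to be continuous in $\phi$ (A1), but $D_{\psi}(\cdot,\phi^k)$ remains $C^1$, so the subdifferential sum rule for a lower semicontinuous function perturbed by a smooth one (as in \cite{Rockafellar}) gives, from Fermat at $\phi^{k+1}$, the inclusion $-\nabla_1 D_{\psi}(\phi^{k+1},\phi^k)\in\partial\hat{D}_{\varphi}(p_{\phi^{k+1}}|p_{\phi_T})$. Along the subsequence put $v_j:=-\nabla_1 D_{\psi}(\phi^{k_j},\phi^{k_j-1})\in\partial\hat{D}_{\varphi}(p_{\phi^{k_j}}|p_{\phi_T})$; then $v_j\to-\nabla_1 D_{\psi}(\phi^*,\phi^*)=0$ and $\phi^{k_j}\to\phi^*$, and — crucially — $\hat{D}_{\varphi}(p_{\phi^{k_j}}|p_{\phi_T})\to\hat{D}_{\varphi}(p_{\phi^*}|p_{\phi_T})$ by continuity, so the outer semicontinuity (closed-graph property) of the subgradient mapping of a continuous function forces $0\in\partial\hat{D}_{\varphi}(p_{\phi^*}|p_{\phi_T})$, i.e. $\phi^*$ is a generalized stationary point.

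I expect the delicate points to be precisely those where nonsmoothness enters in (b): first, invoking Fermat's rule together with the sum rule in the appropriate subgradient calculus from \cite{Rockafellar}, where one must check that adding the $C^1$ term $D_\psi(\cdot,\phi^k)$ is genuinely harmless; and second, the closed-graph property of the subgradient map, which is exactly why continuity of $\phi\mapsto\hat{D}_{\varphi}(p_{\phi}|p_{\phi_T})$ is needed (so that function values, not merely arguments, converge along the subsequence) rather than mere lower semicontinuity. The identity $\nabla_1 D_{\psi}(\bar\phi,\bar\phi)=0$ and the use of $\phi^{k+1}-\phi^k\to0$ to identify the two subsequential limits are routine.
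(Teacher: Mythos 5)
Your proof is correct and follows essentially the same route as the paper's: Fermat's rule at each iterate, collapsing consecutive iterates onto a common limit via $\phi^{k+1}-\phi^k\rightarrow 0$, the identity $\nabla_1 D_{\psi}(\bar{\phi},\bar{\phi})=0$ coming from $\psi'(1)=0$, and, for part (b), the optimality inclusion $-\nabla_1 D_{\psi}(\phi^{k+1},\phi^k)\in\partial \hat{D}_{\varphi}(p_{\phi^{k+1}},p_{\phi_T})$ combined with outer semicontinuity of the subgradient map of the continuous function $\phi\mapsto\hat{D}_{\varphi}(p_{\phi},p_{\phi_T})$. Your additional remarks (interiority of $\Phi^0$ so that Fermat's rule applies, and the need for convergence of function values in the closed-graph argument) merely make explicit points the paper leaves implicit.
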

\begin{proof} 
\underline{We prove $(a)$}. Let $(\phi^{n_k})_k$ be a convergent subsequence of $(\phi^k)_k$ which converges to $\phi^{\infty}$. First, $\phi^{\infty} \in \Phi^0$, because $\Phi^0$ is closed and the subsequence $(\phi^{n_k})$ is a sequence of elements of $\Phi^0$ (proved in Proposition \ref{prop:DecreaseDphi}.b).\\
Let's show now that the subsequence $(\phi^{n_k+1})$ also converges to $\phi^{\infty}$. We simply have:
\begin{eqnarray*}
\|\phi^{n_k+1} - \phi^{\infty}\| & \leq & \|\phi^{n_k} - \phi^{\infty}\| + \|\phi^{n_k+1} - \phi^{n_k}\|
\end{eqnarray*}
Since $\phi^{k+1} - \phi^k \rightarrow 0$ and $\phi^{n_k} \rightarrow \phi^{\infty}$, we conclude that $\phi^{n_k+1} \rightarrow \phi^{\infty}$.\\
By Definition of $\phi^{n_k+1}$, it verifies the infimum in recurrence (\ref{eqn:DivergenceAlgo}), so that the gradient of the optimized function is zero:
\[\nabla \hat{D}_{\varphi}(p_{\phi^{n_k+1}},p_{\phi_T}) + \nabla D_{\psi}(\phi^{n_k+1},\phi^{n_k}) = 0\]
Using the continuity assumptions A1 and AC of the gradients, one can pass to the limit with no problem:
\[\nabla \hat{D}_{\varphi}(p_{\phi^{\infty}},p_{\phi_T}) + \nabla D_{\psi}(\phi^{\infty},\phi^{\infty}) = 0\]
However, the gradient $\nabla D_{\psi}(\phi^{\infty},\phi^{\infty})=0$ because (recall that $\psi'(1)=0$) for any $\phi\in\Phi$
\[\nabla D_{\psi}(\phi,\phi) = \sum_{i=1}^n\int_{\mathcal{X}}{\frac{\nabla h_i(x|\phi)}{h_i(x|\phi)}\psi'\left(\frac{h_i(x|\phi)}{h_i(x|\phi)}\right)h_i(x|\phi)dx}=\sum_{i=1}^n\int_{\mathcal{X}}{\nabla h_i(x|\phi)\psi'(1)dx}\]
which is equal to zero since $\psi'(1)=0$. This implies that $\nabla \hat{D}_{\varphi}(p_{\phi^{\infty}},p_{\phi_T})=0$.\\
%%%%%%%%%%%%%%%%%%%%%%%%%%%%%%%%%%%%%%%%%%%%%%%%%%%%%%%
\underline{We prove (b)}. We use again the Definition of the arginf. As the optimized function is not necessarily differentiable at the points of the sequence $\phi^k$, a necessary condition for $\phi^{k+1}$ to be an infimum is that 0 belongs to the subgradient of the function on $\phi^{k+1}$. Since $D_{\psi}(\phi,\phi^k)$ is assumed to be differentiable, the optimality condition is translated into:
\[-\nabla D_{\psi}(\phi^{k+1},\phi^k) \in \partial \hat{D}_{\varphi}(p_{\phi^{k+1}},p_{\phi_T})\quad \forall k\]
Since $\hat{D}_{\varphi}(p_{\phi},p_{\phi_T})$ is continuous, then its subgradient is outer semicontinuous (see \citep{Rockafellar} Chap 8, Proposition 7). We use the same arguments presented in (a) to conclude the existence of two subsequences $(\phi^{n_k})_k$ and $(\phi^{n_k+1})_k$ which converge to the same limit $\phi^{\infty}$. By Definition of outer semicontinuity, and since $\phi^{n_k+1}\rightarrow\phi^{\infty}$, we have:
\begin{equation}
\limsup_{\phi^{n_k+1}\rightarrow\phi^{\infty}} \partial \hat{D}_{\varphi}(p_{\phi^{n_k+1}},p_{\phi_T})\subset \partial \hat{D}_{\varphi}(p_{\phi^{\infty}},p_{\phi_T})
\label{eqn:subgradInc}
\end{equation}
We want to prove that $0\in\limsup_{\phi^{n_k+1}\rightarrow\phi^{\infty}} \partial \hat{D}_{\varphi}(p_{\phi^{n_k+1}},p_{\phi_T})$. By Definition of limsup\footnote{We use here the Definition corresponding to the outer limit, see \citep{Rockafellar} Chap 4, Definition 1 or Chap 5-B.}:
\[\limsup_{\phi\rightarrow\phi^{\infty}} \partial \hat{D}_{\varphi}(p_{\phi},p_{\phi_T}) = \left\{ u|\exists \phi^k\rightarrow\phi^{\infty},\exists u^k\rightarrow u \text{ with } u^k\in \partial \hat{D}_{\varphi}(p_{\phi^k},p_{\phi_T})\right\}.\]
In our scenario, $\phi = \phi^{n_k+1}$, $\phi^k = \phi^{n_k+1}$, $u = 0$ and $u^k = \nabla_1 D_{\psi}(\phi^{n_k+1},\phi^{n_k})$. The continuity of $\nabla_1 D_{\psi}$ with respect to both arguments and the fact that the two subsequences $\phi^{n_k+1}$ and $\phi^{n_k}$ converge to the same limit, imply that $u^k\rightarrow\nabla_1 D_{\psi}(\phi^{\infty},\phi^{\infty}) = 0$. Hence $u = 0\in\limsup_{\phi^{n_k+1}\rightarrow\phi^{\infty}} \partial \hat{D}_{\varphi}(p_{\phi^{n_k+1}},p_{\phi_T})$. By inclusion (\ref{eqn:subgradInc}), we get our result:
\[0\in \partial \hat{D}_{\varphi}(p_{\phi^{\infty}},p_{\phi_T}),\]
This ends the proof.
\end{proof}
\noindent The assumption $\{\phi^{k+1}-\phi^k\}\rightarrow 0$ used in Proposition \ref{prop:StationaryPhiDiff} is not easy to be checked unless one has a close formula of $\phi^k$. The following Proposition gives a method to prove such assumption. This method seems simpler, but it is not verified in many mixture models, see Sect. (\ref{Example:GaussMix}) for a counter Example.\\
%%%%%%%%%%%%%%% Proposition 3
\begin{proposition} \label{prop:PhiDiffConverge}
Assume that A1, A2 and A3 are verified, then $\{\phi^{k+1}-\phi^k\}\rightarrow 0$. Thus, by Proposition \ref{prop:StationaryPhiDiff} (according to whether AC is verified or not) any limit point of the sequence $\phi^k$ is a (generalized) stationary point of $\hat{D}_{\varphi}(.|\phi_T)$.
\end{proposition}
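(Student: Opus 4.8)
The plan is to use the defining minimization property of $\phi^{k+1}$ to show that the proximal term $D_{\psi}(\phi^{k+1},\phi^{k})$ vanishes in the limit, and then to convert this into $\phi^{k+1}-\phi^{k}\to 0$ via the compactness in A2 together with the strict positivity in A3. First I would evaluate the inequality coming from $\phi^{k+1}=\arginf_{\phi}\bigl[\hat{D}_{\varphi}(p_{\phi},p_{\phi_T})+D_{\psi}(\phi,\phi^{k})\bigr]$ against the competitor $\phi=\phi^{k}$; since $D_{\psi}(\phi^{k},\phi^{k})=0$ this gives
\[\hat{D}_{\varphi}(p_{\phi^{k+1}},p_{\phi_T})+D_{\psi}(\phi^{k+1},\phi^{k})\le \hat{D}_{\varphi}(p_{\phi^{k}},p_{\phi_T}).\]
Because A1 in particular makes $\phi\mapsto\hat{D}_{\varphi}(p_{\phi},p_{\phi_T})$ and $D_{\psi}$ lower semicontinuous, assumption A0 holds, and together with A2 the hypotheses of Proposition \ref{prop:DecreaseDphi}(c) are satisfied, so the sequence $(\hat{D}_{\varphi}(p_{\phi^{k}},p_{\phi_T}))_k$ converges. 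Combining this with $D_{\psi}\ge 0$ and the displayed inequality,
\[0\le D_{\psi}(\phi^{k+1},\phi^{k})\le \hat{D}_{\varphi}(p_{\phi^{k}},p_{\phi_T})-\hat{D}_{\varphi}(p_{\phi^{k+1}},p_{\phi_T})\xrightarrow[k\to\infty]{}0,\]
hence $D_{\psi}(\phi^{k+1},\phi^{k})\to 0$.

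Next I would argue by contradiction. If $\phi^{k+1}-\phi^{k}\not\to 0$, there exist $\varepsilon>0$ and a subsequence along which $\|\phi^{n_k+1}-\phi^{n_k}\|\ge\varepsilon$. By A2 both $(\phi^{n_k})_k$ and $(\phi^{n_k+1})_k$ lie in the compact set $\Phi^0$, so after passing to a further subsequence we may assume $\phi^{n_k}\to\bar{\phi}$ and $\phi^{n_k+1}\to\tilde{\phi}$ with $\bar{\phi},\tilde{\phi}\in\Phi^0$ and $\|\tilde{\phi}-\bar{\phi}\|\ge\varepsilon$, so in particular $\tilde{\phi}\neq\bar{\phi}$. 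Continuity of $D_{\psi}$ on $\Phi\times\Phi$ (A1) then yields $D_{\psi}(\phi^{n_k+1},\phi^{n_k})\to D_{\psi}(\tilde{\phi},\bar{\phi})$, while the first step forces this limit to be $0$; but A3 asserts $D_{\psi}(\phi,\bar{\phi})>0$ for every $\phi\neq\bar{\phi}$, contradicting $D_{\psi}(\tilde{\phi},\bar{\phi})=0$. Hence $\phi^{k+1}-\phi^{k}\to 0$. Since A2 makes $\Phi^0$ closed and A1 is assumed, the hypotheses of Proposition \ref{prop:StationaryPhiDiff} hold, and the final assertion follows from its part (a) or (b) according as AC is verified or not.

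The argument is essentially routine; the only care needed is bookkeeping of which of A1--A3 supplies what. A1 is used twice — once to obtain the semicontinuity needed to invoke Proposition \ref{prop:DecreaseDphi}(c), so that the divergence sequence genuinely converges and the telescoping bound collapses to $0$, and once for the joint continuity of $D_{\psi}$; A2 provides the compactness allowing the simultaneous extraction of convergent subsequences of $(\phi^{n_k})$ and $(\phi^{n_k+1})$ as well as the closedness of $\Phi^0$ needed to chain into Proposition \ref{prop:StationaryPhiDiff}; and A3 is precisely the strict-positivity statement that turns $D_{\psi}(\tilde{\phi},\bar{\phi})=0$ into $\tilde{\phi}=\bar{\phi}$. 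The step that can realistically fail in applications — as the authors emphasize for mixture models — is the verification of A3 itself, not any link in the chain above.
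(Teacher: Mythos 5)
Your proof is correct and follows essentially the same route as the paper: the arginf inequality $\hat{D}_{\varphi}(p_{\phi^{k+1}},p_{\phi_T})+D_{\psi}(\phi^{k+1},\phi^{k})\le \hat{D}_{\varphi}(p_{\phi^{k}},p_{\phi_T})$, convergence of $(\hat{D}_{\varphi}(p_{\phi^{k}},p_{\phi_T}))_k$ via Proposition \ref{prop:DecreaseDphi}(c), compactness of $\Phi^0$ to extract the two convergent subsequences, continuity of $D_{\psi}$, and A3 to reach the contradiction. The only difference is cosmetic: you establish $D_{\psi}(\phi^{k+1},\phi^{k})\to 0$ for the whole sequence before arguing by contradiction, whereas the paper passes to the limit along the extracted subsequences inside the contradiction argument.
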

\begin{proof}
By contradiction, let's suppose that $\phi^{k+1}-\phi^k$ does not converge to 0. There exists a subsequence such that $\|\phi^{N_0(k)+1}-\phi^{N_0(k)}\| > \varepsilon,\; \forall k\geq k_0$. Since $(\phi^k)_k$ belongs to the compact set $\Phi^0$, there exists a convergent subsequence $(\phi^{N_1\circ N_0(k)})_k$ such that $\phi^{N_1\circ N_0(k)}\rightarrow \bar{\phi}$. The sequence $(\phi^{N_1\circ N_0(k)+1})_k$ belongs to the compact set $\Phi^0$, therefore we can extract a further subsequence $(\phi^{N_2\circ N_1\circ N_0(k)+1})_k$ such that $\phi^{N_2\circ N_1\circ N_0(k)+1}\rightarrow \tilde{\phi}$. Besides $\hat{\phi}\neq \tilde{\phi}$. Finally since the sequence $(\phi^{N_1\circ N_0(k)})_k$ is convergent, a further subsequence also converges to the same limit $\bar{\phi}$. We have proved the existence of a subsequence of $(\phi^k)_k$ such that $\phi^{N(k)+1}-\phi^{N(k)}$ does not converge to 0 and such that $\phi^{N(k)+1} \rightarrow \tilde{\phi}$, $\phi^{N(k)} \rightarrow \bar{\phi}$ with $\bar{\phi} \neq \tilde{\phi}$.\\
The real sequence $\hat{D}_{\varphi}(p_{\phi^k},p_{\phi_T})_k$ converges as proved in Proposition \ref{prop:DecreaseDphi}-c. As a result, both sequences $\hat{D}_{\varphi}(p_{\phi^{N(k)+1}},p_{\phi_T})$ and $\hat{D}_{\varphi}(p_{\phi^{N(k)}},p_{\phi_T})$ converge to the same limit being subsequences of the same convergent sequence. In the proof of Proposition \ref{prop:DecreaseDphi}, we can deduce the following inequality:
\begin{equation}
\hat{D}(p_{\lambda^{k+1},\theta^{k+1}},p_{\phi_T}) + D_{\psi}((\lambda^{k+1},\theta^{k+1}),\phi^k) \leq \hat{D}(p_{\lambda^k,\theta^k},p_{\phi_T})
\label{eqn:DivergenceDecreaseSeq}
\end{equation}
which is also verified to any substitution of $k$ by $N(k)$. By passing to the limit on k, we get $D_{\psi}(\tilde{\phi},\bar{\phi}) \leq 0$. However, the distance-like function $D_{\psi}$ is positive, so that it becomes zero. Using assumption A3, $D_{\psi}(\tilde{\phi},\bar{\phi}) = 0$ implies that $\tilde{\phi} = \bar{\phi}$. This contradicts the hypothesis that $\phi^{k+1}-\phi^k$ does not converge to 0.\\
The second part of the Proposition is a direct result of Proposition \ref{prop:StationaryPhiDiff}.
\end{proof}
%%%%%%%%%%%%%%%%%%%%%%%%%%%%%% Corolary 1
\begin{corollary} \label{Cor:TotalConverg}
Under assumptions of Proposition \ref{prop:PhiDiffConverge}, the set of accumulation points of $(\phi^k)_k$ is a connected compact set. Moreover, if $\phi\mapsto\hat{D}(p_{\phi},p_{\phi_T})$ is strictly convex in a neighborhood of a limit point of the sequence $(\phi^k)_k$, then the whole sequence $(\phi^k)_k$ converges to a local minimum of $\hat{D}(p_{\phi},p_{\phi_T})$.
\end{corollary}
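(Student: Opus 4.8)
The plan is to derive the corollary from Proposition~\ref{prop:PhiDiffConverge} in two stages: first the topological statement about the accumulation set, then the convergence under local strict convexity. For the first part, the key input is exactly the conclusion of Proposition~\ref{prop:PhiDiffConverge}, namely $\phi^{k+1}-\phi^k\to 0$, combined with the fact (Proposition~\ref{prop:DecreaseDphi}(c)) that $(\phi^k)_k$ lies in the compact set $\Phi^0$. I would invoke the standard topological lemma: if a sequence in a compact metric space has successive differences tending to zero, then its set of accumulation points is nonempty, compact, and \emph{connected}. I would sketch the proof of connectedness by contradiction --- if the accumulation set $L$ splits as $L = L_1 \sqcup L_2$ into two disjoint nonempty compacts, separated by distance $3\delta>0$, then since the $\phi^k$ get arbitrarily close to both $L_1$ and $L_2$ infinitely often while moving in steps eventually smaller than $\delta$, there must be infinitely many $\phi^k$ at distance $\geq\delta$ from $L_1\cup L_2$; extracting a convergent subsequence of these yields an accumulation point outside $L$, a contradiction. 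Compactness of $L$ follows since $L$ is a closed subset of the compact $\Phi^0$, and nonemptiness from Bolzano--Weierstrass.

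For the second part, suppose $\phi^\infty$ is a limit point near which $\phi\mapsto\hat D_\varphi(p_\phi,p_{\phi_T})$ is strictly convex. By Proposition~\ref{prop:PhiDiffConverge} together with Proposition~\ref{prop:StationaryPhiDiff}, $\phi^\infty$ is a (generalized) stationary point of $\hat D_\varphi(\cdot,p_{\phi_T})$; strict convexity in a neighborhood then forces $\phi^\infty$ to be the \emph{unique} local minimizer of $\hat D_\varphi$ in that neighborhood, in particular an isolated stationary point, hence an isolated accumulation point of $(\phi^k)_k$. But by the first part the accumulation set is connected, and a connected set containing an isolated point must be the singleton $\{\phi^\infty\}$. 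A sequence in a compact set whose unique accumulation point is $\phi^\infty$ converges to $\phi^\infty$; and $\phi^\infty$ is a local minimum of $\hat D_\varphi$ by the strict convexity, which gives the claim.

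The main obstacle I anticipate is the careful handling of the connectedness argument, which is the one genuinely non-routine ingredient: one must make precise the claim that the iterates cannot "jump" between two separated components once the step sizes are small, and produce from that a forbidden accumulation point. I would state this as a short lemma (or cite a standard reference, e.g. Ostrowski's theorem on the limit set of a sequence with $\phi^{k+1}-\phi^k\to 0$ in a compact set) rather than reprove it in full. A secondary point requiring a line of care is the passage from "generalized stationary point of a locally strictly convex function" to "isolated local minimizer": I would note that strict convexity on a neighborhood $U$ implies $\hat D_\varphi$ has at most one point of $U$ at which $0$ lies in the subgradient, and that this point is the strict minimizer over $U$, so no other accumulation point can lie in $U$. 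Everything else --- the reduction to compactness, the use of Propositions~\ref{prop:DecreaseDphi}--\ref{prop:PhiDiffConverge}, and the final "unique accumulation point $\Rightarrow$ convergence" step --- is routine.
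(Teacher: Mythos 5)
Your proposal is correct and follows essentially the same route as the paper: the paper also obtains connectedness and compactness of the accumulation set from boundedness plus $\phi^{k+1}-\phi^k\to 0$ via Ostrowski's theorem (Theorem 28.1), then argues that local strict convexity makes the limit point isolated, which together with connectedness forces a single accumulation point, and concludes convergence of the whole sequence by the same compactness/subsequence contradiction. Your treatment is in fact slightly more explicit than the paper's on one point, namely that stationarity (from Propositions \ref{prop:StationaryPhiDiff}--\ref{prop:PhiDiffConverge}) combined with strict convexity yields a unique local minimizer in the neighborhood, a step the paper leaves implicit when it asserts isolation of the limit point.
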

\begin{proof}
Since the sequence $(\phi)_k$ is bounded and verifies $\phi^{k+1}-\phi^k\rightarrow 0$, then Theorem 28.1 in \citep{Ostrowski} implies that the set of accumulation points of $(\phi^k)_k$ is a connected compact set. It is not empty since $\Phi^0$ is compact. Let $\phi^{\infty}$ be a limit point of $(\phi^k)_k$. The assumption about strict convexity of $\hat{D}(p_{\phi},p_{\phi_T})$ in a neighborhood of $\phi^{\infty}$ implies that it is isolated in the sense that if there are another limit point $\tilde{\phi}$, then there is $\varepsilon>0$ such that $\|\phi^{\infty} - \tilde{\phi}\|>\varepsilon$. Hence, the set of accumulation points can be written as the union of at least two disjoint open sets which contradicts the connectedness property. Thus, $\phi^{\infty}$ is the only limit point of the sequence $(\phi^k)$. To end the proof, we need to show that the whole sequence converge. By contradiction, if it does not converge, there exists then $\varepsilon>0$ and an infinity of terms which verifies $\|\phi^{N_0(k)} - \phi^{\infty}\|>\varepsilon$. By compactness of $\Phi^0$, one may extract a subsequence of $(\phi^{N_0(k)})_k$, say $(\phi^{N_1\circ N_0(k)})_k$, which converges to some $\hat{\phi}$. Moreover, by continuity of the euclidean norm, $\|\phi^{N_1\circ N_0(k)} - \phi^{\infty}\|\rightarrow \|\hat{\phi} - \phi^{\infty}\|$. Hence $\|\hat{\phi} - \phi^{\infty}\|\geq\varepsilon$. Contradiction is reached by uniqueness of the limit point of the sequence $(\phi^k)_k$. 
\end{proof}
Proposition \ref{prop:PhiDiffConverge} and Corollary \ref{Cor:TotalConverg} describe what we may hope to get of the sequence $\phi^k$. Convergence of the whole sequence is bound by a local convexity assumption in a neighborhood of a limit point. Although simple, this assumption remains difficult to be checked since we do not know where might be the limit points. Besides, assumption A3 is very restrictive, and is not verified in mixture models.\\
Proposition \ref{prop:StationaryPhiDiff} and \ref{prop:PhiDiffConverge} were developed for the likelihood function in the paper of \cite{Tseng}. Similar results for a general class of functions replacing $\hat{D}_{\varphi}$ and $D_{\psi}$ which may not be differentiable (but still continuous) are presented in \cite{ChretienHero}.  In these results, assumption A3 is essential. Although \cite{ChretienHeroProxGener} overcomes this problem, their approach demands that the log-likelihood has $-\infty$ limit as $\|\phi\|\rightarrow\infty$. This is simply not verified for mixture models. We present a similar method to \cite{ChretienHeroProxGener} based on the idea of \cite{Tseng} of using the set $\Phi^0$ which is valid for mixtures. We lose, however, the guarantee of consecutive decrease of the sequence.\\
%%%%%%%%%%%%%%%%%%%%%%% Proposition 4
\begin{proposition} \label{prop:NewRes}
Assume A1, AC and A2 verified. Any limit point of the sequence $(\phi^k)_k$ is a stationary point of $\phi\rightarrow\hat{D}(p_{\phi},p_{\phi_T})$. If AC is dropped, then 0  belongs to the subgradient of $\phi\mapsto\hat{D}(p_{\phi},p_{\phi_T})$ calculated at the limit point.
\end{proposition}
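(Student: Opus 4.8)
The plan is to follow the scheme of Proposition~\ref{prop:StationaryPhiDiff}, but, since we no longer assume $\{\phi^{k+1}-\phi^k\}\to 0$ nor A3, to work with the \emph{predecessor} of a convergent subsequence instead of its successor. First, Proposition~\ref{prop:DecreaseDphi} (which only needs A0, implied by A1, and A2) guarantees that the sequence $(\phi^k)_k$ is well defined, lies in the compact set $\Phi^0\subset\mathrm{int}(\Phi)$, and that the real sequence $(\hat D_\varphi(p_{\phi^k},p_{\phi_T}))_k$ is non-increasing and converges, say to $\ell$. Fix a limit point $\phi^{\infty}$ of $(\phi^k)_k$ and a subsequence $\phi^{n_k}\to\phi^{\infty}$ (with $n_k\ge 1$); then $\phi^{\infty}\in\Phi^0$ since $\Phi^0$ is closed. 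As the predecessors $\phi^{n_k-1}$ all belong to the compact set $\Phi^0$, after passing to a further subsequence we may assume $\phi^{n_k-1}\to\check\phi$ for some $\check\phi\in\Phi^0$.

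The crucial step --- and the place where the relaxation of A3 is paid for --- is to show that the cross term $\nabla_1 D_\psi(\phi^{\infty},\check\phi)$ vanishes. Inequality (\ref{eqn:DivergenceDecreaseSeq}) taken at index $n_k-1$ reads $\hat D_\varphi(p_{\phi^{n_k}},p_{\phi_T}) + D_\psi(\phi^{n_k},\phi^{n_k-1}) \le \hat D_\varphi(p_{\phi^{n_k-1}},p_{\phi_T})$; the two $\hat D_\varphi$ terms are subsequences of the convergent sequence above, hence both tend to $\ell$, and since $D_\psi\ge 0$ this gives $D_\psi(\phi^{n_k},\phi^{n_k-1})\to 0$. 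By the continuity of $D_\psi$ (A1), $D_\psi(\phi^{\infty},\check\phi)=0$. Because $\psi\ge 0$ with $\psi(t)=0$ iff $t=1$ and $h_i(x|\phi)>0$ $dx$-a.e., this forces $h_i(\cdot|\phi^{\infty})=h_i(\cdot|\check\phi)$ $dx$-a.e.\ for every $i$, so that $h_i(x|\phi^{\infty})/h_i(x|\check\phi)=1$ $dx$-a.e. Since $\psi'(1)=0$, the integrand defining
\[\nabla_1 D_\psi(\phi^{\infty},\check\phi) = \frac1n\sum_{i=1}^n\int_{\mathcal X}\psi'\!\left(\frac{h_i(x|\phi^{\infty})}{h_i(x|\check\phi)}\right)\nabla h_i(x|\phi^{\infty})\,dx\]
is zero $dx$-a.e., whence $\nabla_1 D_\psi(\phi^{\infty},\check\phi)=0$. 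Note that, contrary to Tseng's setting, we never need $\check\phi=\phi^{\infty}$: only the equality of the conditional densities matters.

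It remains to pass to the limit in the optimality condition at $\phi^{n_k}$. As $\phi^{n_k}\in\Phi^0\subset\mathrm{int}(\Phi)$ and $\phi^{n_k}$ realizes the infimum in (\ref{eqn:DivergenceAlgo}) with proximal center $\phi^{n_k-1}$, under A1 and AC the first-order condition $\nabla \hat D_\varphi(p_{\phi^{n_k}},p_{\phi_T}) + \nabla_1 D_\psi(\phi^{n_k},\phi^{n_k-1}) = 0$ holds; letting $k\to\infty$ and using the continuity of $\nabla\hat D_\varphi$ (AC) and of $\nabla_1 D_\psi$ (A1), together with the previous paragraph, yields $\nabla \hat D_\varphi(p_{\phi^{\infty}},p_{\phi_T})=0$, which proves the first assertion. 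When AC is dropped, $\phi\mapsto\hat D_\varphi(p_{\phi},p_{\phi_T})$ is still continuous by A1, hence its subgradient is outer semicontinuous (\citep{Rockafellar}, Chap.~8, Prop.~7); the optimality condition becomes $-\nabla_1 D_\psi(\phi^{n_k},\phi^{n_k-1})\in\partial\hat D_\varphi(p_{\phi^{n_k}},p_{\phi_T})$, and since the left-hand side tends to $0$ and $\phi^{n_k}\to\phi^{\infty}$, the characterization of the outer limit gives $0\in\limsup_k\partial\hat D_\varphi(p_{\phi^{n_k}},p_{\phi_T})\subset\partial\hat D_\varphi(p_{\phi^{\infty}},p_{\phi_T})$, exactly as in Proposition~\ref{prop:StationaryPhiDiff}(b). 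The main obstacle is precisely the identification $\nabla_1 D_\psi(\phi^{\infty},\check\phi)=0$ without A3; once the predecessor trick delivers $D_\psi(\phi^{\infty},\check\phi)=0$, everything else is a routine passage to the limit.
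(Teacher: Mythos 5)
Your proposal is correct and follows essentially the same route as the paper's own proof: extract a convergent subsequence together with its predecessors, use inequality (\ref{eqn:DivergenceDecreaseSeq}) and the convergence of $(\hat{D}_{\varphi}(p_{\phi^k},p_{\phi_T}))_k$ to get $D_{\psi}(\phi^{\infty},\check{\phi})=0$, deduce equality of the conditional densities so that $\nabla_1 D_{\psi}(\phi^{\infty},\check{\phi})=0$ via $\psi'(1)=0$, and pass to the limit in the (sub)gradient optimality condition, with the outer-semicontinuity argument handling the case without AC. No gaps to report.
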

\begin{proof} 
If $(\phi^k)_k$ converges to, say, $\phi^{\infty}$, then the result falls simply from Proposition \ref{prop:StationaryPhiDiff}.\\
If $(\phi^k)_k$ does not converge. Since $\Phi^0$ is compact and $\forall k, \phi^k\in\Phi^0$ (proved in Proposition \ref{prop:DecreaseDphi}), there exists a subsequence $(\phi^{N_0(k)})_k$ such that $\phi^{N_0(k)}\rightarrow\tilde{\phi}$. Let's take the subsequence $(\phi^{N_0(k)-1})_k$. This subsequence does not necessarily converge; still it is contained in the compact $\Phi^0$, so that we can extract a further subsequence $(\phi^{N_1\circ N_0(k)-1})_k$ which converges to, say, $\bar{\phi}$. Now, the subsequence $(\phi^{N_1\circ N_0(k)})_k$ converges to $\tilde{\phi}$, because it is a subsequence of $(\phi^{N_0(k)})_k$. We have proved until now the existence of two convergent subsequences $\phi^{N(k)-1}$ and $\phi^{N(k)}$ with \emph{a priori} different limits. For simplicity and without any loss of generality, we will consider these subsequences to be $\phi^k$ and $\phi^{k+1}$ respectively.\\
Conserving previous notations, suppose that $\phi^{k+1}\rightarrow \tilde{\phi}$ and $\phi^{k}\rightarrow \bar{\phi}$. We use again inequality (\ref{eqn:DivergenceDecreaseSeq}):
\[\hat{D}(p_{\phi^{k+1}},p_{\phi_T}) + D_{\psi}(\phi^{k+1},\phi^k) \leq \hat{D}(p_{\lambda^k,\theta^k},p_{\phi_T})\]
By taking the limits of the two parts of the inequality as $k$ tends to infinity, and using the continuity of the two functions, we have 
\[\hat{D}(p_{\tilde{\phi}},p_{\phi_T}) + D_{\psi}(\tilde{\phi},\bar{\phi}) \leq \hat{D}(p_{\bar{\phi}},p_{\phi_T})\]
Recall that under A1-2, the sequence $\left(\hat{D}_{\varphi}(p_{\phi^k},p_{\phi_T})\right)_k$ converges, so that it has the same limit for any subsequence, i.e. $\hat{D}(p_{\tilde{\phi}},p_{\phi_T}) = \hat{D}(p_{\bar{\phi}},p_{\phi_T})$. We also use the fact that the distance-like function $D_{\psi}$ is non negative to deduce that $D_{\psi}(\tilde{\phi},\bar{\phi}) = 0$. Looking closely at the Definition of this divergence (\ref{eqn:DivergenceClasses}), we get that if the sum is zero, then each term is also zero since all terms are non negative. This means that:
\[\forall i\in\{1,\cdots,n\}, \quad \int_{\mathcal{X}}{\psi\left(\frac{h_i(x|\tilde{\phi})}{h_i(x|\bar{\phi})}\right)h_i(x|\bar{\phi})dx} = 0\]
The integrands are non negative functions, so they vanish almost ever where with respect to the measure $dx$ defined on the space of labels.
\[\forall i\in\{1,\cdots,n\}, \quad \psi\left(\frac{h_i(x|\tilde{\phi})}{h_i(x|\bar{\phi})}\right)h_i(x|\bar{\phi}) = 0\quad dx-a.e.\]
The conditional densities $h_i$ are supposed to be positive\footnote{In the case of two Gaussian (or more generally exponential) components, this is justified by virtue of a suitable choice of the initial condition.}, i.e. $ h_i(x|\bar{\phi})>0, dx-a.e.$. Hence, $\psi\left(\frac{h_i(x|\tilde{\phi})}{h_i(x|\bar{\phi})}\right) = 0, dx-a.e.$. On the other hand, $\psi$ is chosen in a way that $\psi(z)=0$ iff $z=1$, therefore :
\begin{equation}
\forall i\in\{1,\cdots,n\},\quad h_i(x|\tilde{\phi}) = h_i(x|\bar{\phi}) \quad dx-a.e.
\label{eqn:ProportionsEquality}
\end{equation}
Since $\phi^{k+1}$ is, by Definition, an infimum of $\phi\mapsto\hat{D}(p_{\phi},p_{\phi_T}) + D_{\psi}(\phi,\phi^k)$, then the gradient of this function is zero on $\phi^{k+1}$. It results that:
\[\nabla \hat{D}(p_{\phi^{k+1}},p_{\phi_T}) + \nabla D_{\psi}(\phi^{k+1},\phi^k) = 0,\quad \forall k\]
Taking the limit on $k$, and using the continuity of the derivatives, we get that:
\begin{equation}
\nabla \hat{D}(p_{\tilde{\phi}},p_{\phi_T}) + \nabla D_{\psi}(\tilde{\phi},\bar{\phi}) = 0
\label{eqn:GradientLimit}
\end{equation}
Let's write explicitly the gradient of the second divergence:
\[\nabla D_{\psi}(\tilde{\phi},\bar{\phi}) = \sum_{i=1}^n\int_{\mathcal{X}}{\frac{\nabla h_i(x|\tilde{\phi})}{h_i(x|\bar{\phi})}\psi'\left(\frac{h_i(x|\tilde{\phi})}{h_i(x|\bar{\phi})}\right)h_i(x|\bar{\phi})}\]
We use now the identities (\ref{eqn:ProportionsEquality}), and the fact that $\psi'(1)=0$, to deduce that:
\[\nabla D_{\psi}(\tilde{\phi},\bar{\phi}) = 0 \]
This entails using (\ref{eqn:GradientLimit}) that $\nabla \hat{D}(p_{\tilde{\phi}},p_{\phi_T}) = 0$.\\ 
Comparing the proved result with the notation considered at the beginning of the proof, we have proved that the limit of the subsequence $(\phi^{N_1\circ N_0(k)})_k$ is a stationary point of the objective function. Therefore, The final step is to deduce the same result on the original convergent subsequence $(\phi^{N_0(k)})_k$. This is simply due to the fact that $(\phi^{N_1\circ N_0(k)})_k$ is a subsequence of the convergent sequence $(\phi^{N_0(k)})_k$, hence they have the same limit. \\
\textbf{When assumption AC is dropped,} similar arguments to those used in the proof of Proposition \ref{prop:StationaryPhiDiff}-b. are employed. The optimality condition in (\ref{eqn:DivergenceAlgo}) implies :
\[-\nabla D_{\psi}(\phi^{k+1},\phi^k) \in \partial \hat{D}_{\varphi}(p_{\phi^{k+1}},p_{\phi_T})\quad \forall k\]
Function $\phi\mapsto\hat{D}_{\varphi}(p_{\phi},p_{\phi_T})$ is continuous, hence its subgradient is outer semicontinuous and:
\begin{equation}
\limsup_{\phi^{k+1}\rightarrow\phi^{\infty}} \partial \hat{D}_{\varphi}(p_{\phi^{k+1}},p_{\phi_T})\subset \partial \hat{D}_{\varphi}(p_{\tilde{\phi}},p_{\phi_T})
\label{eqn:OSCInclusion}
\end{equation}
By Definition of limsup:
\[\limsup_{\phi\rightarrow\phi^{\infty}} \partial \hat{D}_{\varphi}(p_{\phi},p_{\phi_T}) = \left\{ u|\exists \phi^k\rightarrow\phi^{\infty},\exists u^k\rightarrow u \text{ with } u^k\in \partial \hat{D}_{\varphi}(p_{\phi^k},p_{\phi_T})\right\}\]
In our scenario, $\phi = \phi^{k+1}$, $\phi^k = \phi^{k+1}$, $u = 0$ and $u^k = \nabla_1 D_{\psi}(\phi^{k+1},\phi^{k})$. We have proved above in this proof that $\nabla_1 D_{\psi}(\tilde{\phi},\bar{\phi}) = 0$ using only the convergence of $(\hat{D}_{\varphi}(p_{\phi^k},p_{\phi_T}))_k$, inequality (\ref{eqn:DivergenceDecreaseSeq}) and the properties of $D_{\psi}$. Assumption AC was not needed. Hence, $u^k\rightarrow 0$. This proves that, $u = 0\in\limsup_{\phi^{k+1}\rightarrow\phi^{\infty}} \partial \hat{D}_{\varphi}(p_{\phi^{n_k+1}},p_{\phi_T})$. Finally, using the inclusion (\ref{eqn:OSCInclusion}), we get our result:
\[0\in \partial \hat{D}_{\varphi}(p_{\tilde{\phi}},p_{\phi_T}),\]
which ends the proof.
\end{proof}

The proof of the previous proposition is very similar to the proof of Proposition \ref{prop:StationaryPhiDiff}. The key idea is to use the sequence of conditional densities $h_i(x|\phi^k)$ instead of the sequence $\phi^k$. According to the application, one may be interested only in Proposition \ref{prop:DecreaseDphi} or in Propositions \ref{prop:StationaryPhiDiff}-\ref{prop:NewRes}. If one is interested in the parameters, Propositions \ref{prop:StationaryPhiDiff} to \ref{prop:NewRes} should be used, since we need a stable limit of $(\phi^k)_k$. If we are only interested in minimizing an error criterion $\hat{D}_{\varphi}(p_{\phi},p_{\phi^T})$ between the estimated distribution and the true one, Proposition \ref{prop:DecreaseDphi} should be sufficient.
%%%%%%%%%%%%%%%%%%%%%%%%%%%%%%%%%%%%%%%%%%%%%%%%%%%%%%%%%%%%%%%%%
%=======================================================================
%%%%%%%%%%%%%%%%%%%%%%%%%%%%%%%%%%%%%%%%%%%%%%%%%%%%%%%%%%%%%%%%%
\section{Case studies}
\subsection{An algorithm with theoretically global infimum attainment}\label{subsec:GlobaInf}
We present a variant of algorithm (\ref{eqn:DivergenceAlgo}) which ensures \emph{theoretically} the convergence to a global infimum of the objective function $\hat{D}_{\varphi}(p_{\phi},p_{\phi^T})$ as long as there exists a convergent subsequence of $(\phi^k)_k$. The idea is the same as Theorem 3.2.4 in \citep{ChretienHeroProxGener}. Define $\phi^{k+1}$ by:
\[\phi^{k+1} = \arginf_{\phi} \hat{D}_{\varphi}(p_{\phi},p_{\phi^T}) + \beta_k D_{\psi}(\phi,\phi^k).\]
The proof of convergence is very simple and does not depend on the differentiability of any of the two functions $\hat{D}_{\varphi}$ or $D_{\psi}$. We only assume A1 and A2 to be verified. Let $(\phi^{N(k)})_k$ be a convergent subsequence. Let $\phi^{\infty}$ be its limit. This is guaranteed by the compactness of $\Phi^0$ and the fact that the whole sequence $(\phi^k)_k$ resides in $\Phi^0$ (see Proposition \ref{prop:DecreaseDphi}-b). Suppose also that the sequence $(\beta_k)_k$ converges to 0 as $k$ goes to infinity.\\ 
Let $\phi$ by a vector of $\Phi$ which has a value of $\hat{D}_{\varphi}(p_{\phi},p_{\phi^T})$ strictly inferior to the value of the same function at $\phi^{\infty}$, i.e.
\begin{equation}
\hat{D}_{\varphi}(p_{\phi},p_{\phi^T}) < \hat{D}_{\varphi}(p_{\phi^{\infty}},p_{\phi^T}).
\label{eqn:GlobalInfAlgo1}
\end{equation}
By Definition of $\phi^{N(k)}$, we have:
\[\hat{D}_{\varphi}(p_{\phi^{N(k)}},p_{\phi^T}) + \beta_{N(k)-1} D_{\psi}(\phi^{N(k)},\phi^{N(k)-1}) \leq \hat{D}_{\varphi}(p_{\phi},p_{\phi^T}) + \beta_{N(k)-1} D_{\psi}(\phi,\phi^{N(k)}),\]
which holds for every $\phi$ in the whole set $\Phi$. Using the non negativity of the term $\beta_{N(k)-1}D_{\psi}(\phi^{N(k)},\phi^{N(k)-1})$, one can write:
\begin{equation}
\hat{D}_{\varphi}(p_{\phi^{N(k)}},p_{\phi^T}) \leq \hat{D}_{\varphi}(p_{\phi},p_{\phi^T}) + \beta_{N(k)} D_{\psi}(\phi,\phi^{N(k)}).
\label{eqn:GlobalInfAlgo2}
\end{equation}
As we pass to the limit on $k$, recall firstly that $(\beta_k)_k$ converges to 0, so that any subsequence $(\beta_{N(k)})_k$ also converges to 0. Secondly, the continuity assumption on $D_{\psi}$ implies that, since $\phi^{N(k)}\rightarrow\phi^{\infty}$, $D_{\psi}(\phi,\phi^{N(k)})$ converges to $D_{\psi}(\phi,\phi^{\infty})$. By compactness of $\Phi^0$ and Proposition \ref{prop:DecreaseDphi}-b, we have $\phi^{\infty}\in\Phi^0$. The continuity again of $D_{\psi}$ will imply that the quantity $D_{\psi}(\phi,\phi^{\infty})$ is finite. Finally, inequality (\ref{eqn:GlobalInfAlgo2}) now implies that:
\[\hat{D}_{\varphi}(p_{\phi^{\infty}},p_{\phi^T}) \leq \hat{D}_{\varphi}(p_{\phi},p_{\phi^T})\]
which contradicts with the choice of $\phi$ verifying (\ref{eqn:GlobalInfAlgo1}). Hence, $\phi^{\infty}$ is a global infimum on $\Phi$.\\
The problem with this approach is that it depends heavily on the fact that the supremum on each step of the algorithm is calculated exactly. This does not happen in general unless function $\hat{D}_{\varphi}(p_{\phi},p_{\phi^T}) + \beta_k D_{\psi}(\phi,\phi^k)$ is convex or that we dispose of an algorithm which can solve perfectly non convex optimization problems\footnote{In this case, there is no meaning in applying an iterative proximal algorithm. We would have used the optimization algorithm directly on the objective function $\hat{D}_{\varphi}(p_{\phi},p_{\phi^T})$}. Although in our approach, we use a similar assumption to prove the consecutive decreasing of $\hat{D}_{\varphi}(p_{\phi},p_{\phi^T})$, we can replace the infimum calculus in (\ref{eqn:DivergenceAlgo}) by two things. We require that at each step we find a local infimum of $\hat{D}_{\varphi}(p_{\phi},p_{\phi^T}) + D_{\psi}(\phi,\phi^k)$ whose evaluation with $\phi\mapsto\hat{D}_{\varphi}(p_{\phi},p_{\phi^T})$ is less than the previous term of the sequence $\phi^k$. If we can no longer find any local minima verifying the claim, the procedure stops with $\phi^{k+1}=\phi^k$. This ensures the availability of all the proofs presented in this paper with no change.

%%%%%%%%%%%%%%%%%%%%%%%%%%%%%%%%%%%%%%%%%%%%%%%%%%%%%%%%%%%
%%%%%%%%%%%%%%%%%%%%%%%%%%%%%%%%%%%%%%%%%%%%%%%%%%%%%%%%%%%%
\subsection{The two-component Gaussian mixture}\label{Example:GaussMix}
We suppose that the model $(p_{\phi})_{\phi\in\Phi}$ is a mixture of two gaussian densities, and that we are only interested in estimating the means $\mu=(\mu_1,\mu_2)\in\mathbb{R}^2$ and the proportion $\lambda\in[\eta,1-\eta]$. The use of $\eta$ is to avoid cancellation of any of the two components, and to keep the hypothesis $h_i(x|\phi)>0$ for $x=1,2$ verified. We also suppose that the components variances are reduced ($\sigma_i = 1$). The model takes the form 
\begin{equation}
p_{\lambda,\mu}(x) = \frac{\lambda}{\sqrt{2\pi}} e^{-\frac{1}{2}(x-\mu_1)^2} + \frac{1-\lambda}{\sqrt{2\pi}} e^{-\frac{1}{2}(x-\mu_2)^2}.
\label{eqn:GaussMixModel}
\end{equation} 
In the case of $\varphi(t)=-\log(t)+t-1$, the set $\Phi^0$ is given by $\Phi^0 = J^{-1}\left([J(\phi^0),+\infty)\right)$. The log-likelihood function $J$ is clearly of class $\mathcal{C}^1$(int($\Phi$)), where $\Phi = [\eta,1-\eta]\times\mathbb{R}^2$. The regularization term $D_{\psi}$ is defined by (\ref{eqn:DivergenceClassesNtNorm}) where:
\[h_i(1|\phi) = \frac{\lambda e^{-\frac{1}{2}(y_i-\mu_1)^2}}{\lambda e^{-\frac{1}{2}(y_i-\mu_1)^2} + (1-\lambda) e^{-\frac{1}{2}(y_i-\mu_2)^2}}, \quad h_i(2|\phi) = 1-h_i(1|\phi).\]
Functions $h_i$ are clearly of class $\mathcal{C}^1$(int($\Phi$)), hence, assumptions A1 and AC are verified. We prove that $\Phi^0$ is closed and bounded which is sufficient to conclude its compactness, since the space $[\eta,1-\eta]\times\mathbb{R}^2$ provided with the euclidean distance is complete.\\ 
\textbf{If we are using the dual estimator of the $\varphi-$divergence given by (\ref{eqn:DivergenceDef})}, then assumption A0 can be verified using the maximum theorem of \cite{Berge}. There is still a great difficulty in studying the properties (closedness or compactness) of the set $\Phi^0$. Moreover, all convergence properties of the sequence $\phi^k$ require the continuity of the estimated $\varphi-$divergence $\hat{D}_{\varphi}(p_{\phi},p_{\phi^T})$ with respect to $\phi$. In order to prove the continuity of the estimated divergence, we need to assume that $\Phi$ is compact, i.e. assume that the means are included in an interval of the form $[\mu_{\min},\mu_{\max}]$. Now, using Theorem 10.31 from \cite{Rockafellar}, $\phi\mapsto\hat{D}_{\varphi}(p_{\phi},p_{\phi^T})$ is continuous and differentiable almost everywhere with respect to $\phi$. \\
The compactness assumption of $\Phi$ implies directly the compactness of $\Phi^0$. Indeed
\begin{eqnarray*}
\Phi^0 & = & \left\{\phi\in\Phi, \hat{D}_{\varphi}(p_{\phi},p_{\phi^T})\leq \hat{D}_{\varphi}(p_{\phi^0},p_{\phi^T})\right\} \\
			 & = & \hat{D}_{\varphi}(p_{\phi},p_{\phi^T})^{-1}\left((-\infty, \hat{D}_{\varphi}(p_{\phi^0},p_{\phi^T})]\right).
\end{eqnarray*}
$\Phi^0$ is then the inverse image by a continuous function of a closed set, so it is closed in $\Phi$. Hence, it is compact.
\begin{conclusion}
\label{conc:conclusion1}
Using Propositions \ref{prop:NewRes} and \ref{prop:DecreaseDphi}, if $\Phi=[\eta,1-\eta]\times [\mu_{\min},\mu_{\max}]^2$, the sequence $(\hat{D}_{\varphi}(p_{\phi^k},p_{\phi^T}))_k$ defined through formula (\ref{eqn:DivergenceDef}) converges and there exists a subsequence $(\phi^{N(k)})$ which converges to a stationary point of the estimated divergence. Moreover, every limit point of the sequence $(\phi^k)_k$ is a stationary point of the estimated divergence. 
\end{conclusion}
\textbf{If we are using the kernel-based dual estimator given by (\ref{eqn:NewDivergenceDef})} with a Gaussian kernel density estimator, then function $\phi\mapsto \hat{D}_{\varphi}(p_{\phi},p_{\phi^T})$ is continuously differentiable over $\Phi$ even if the means $\mu_1$ and $\mu_2$ are not bounded. For example, take $\varphi = \varphi_{\gamma}$ defined by (\ref{eqn:CressieReadPhi}). There is one condition which relates the window of the kernel, say $w$, with the value of $\gamma$; $\gamma(w^2-1)>-1$. For $\gamma=2$ (the Pearson's $\chi^2$), we need that $w^2>1/2$. For $\gamma=1/2$ (the Hellinger), there is no condition on $w$.\\
Closedness of $\Phi^0$ is proved similarly to the previous case. Boundedness is however must be treated differently since $\Phi$ is not necessarily compact and is supposed to be $\Phi = [\eta,1-\eta]^s\times\mathbb{R}^s$. For simplicity take $\varphi=\varphi_{\gamma}$. The idea is to choose $\phi^0$ an initialization for the proximal algorithm in a way that $\Phi^0$ does not include unbounded values of the means. Continuity of $\phi\mapsto\hat{D}_{\varphi}(p_{\phi},p_{\phi^T})$ permits to calculate the limits when either (or both) of the means tends to infinity. If both means goes to infinity, then $p_{\phi}(x)\rightarrow 0,\forall x$. Thus, for $\gamma\in(0,\infty)\setminus \{1\}$, we have $\hat{D}_{\varphi}(p_{\phi},p_{\phi^T})\rightarrow \frac{1}{\gamma(\gamma-1)}$. For $\gamma<0$, the limit is infinity. If only one of the means tends to $\infty$, then the corresponding component vanishes from the mixture. Thus, if we choose $\phi^0$ such that:
\begin{eqnarray}
\hat{D}_{\varphi}(p_{\phi^0},p_{\phi^T}) & < & \min\left(\frac{1}{\gamma(\gamma-1)}, \inf_{\lambda,\mu}\hat{D}_{\varphi}(p_{(\lambda,\infty,\mu)},p_{\phi^T})\right) \text{ if } \gamma \in(0,\infty)\setminus \{1\};\label{eqn:CondGaussMixNewDual1} \\
\hat{D}_{\varphi}(p_{\phi^0},p_{\phi^T}) & < & \inf_{\lambda,\mu}\hat{D}_{\varphi}(p_{(\lambda,\infty,\mu)},p_{\phi^T})\qquad \text{ if } \gamma <0,
\label{eqn:CondGaussMixNewDual2}
\end{eqnarray}
then the algorithm starts at a point of $\Phi$ whose function value is inferior to the limits of $\hat{D}_{\varphi}(p_{\phi},p_{\phi^T})$ at infinity. By Proposition \ref{prop:DecreaseDphi}, the algorithm will continue to decrease the value of $\hat{D}_{\varphi}(p_{\phi},p_{\phi^T})$ and never goes back to the limits at infinity. Besides, the Definition of $\Phi^0$ permits to conclude that if $\phi^0$ is chosen according to condition (\ref{eqn:CondGaussMixNewDual1},\ref{eqn:CondGaussMixNewDual2}), then $\Phi^0$ is bounded. Thus, $\Phi^0$ becomes compact. Unfortunately the value of $\inf_{\lambda,\mu}\hat{D}_{\varphi}(p_{(\lambda,\infty,\mu)},p_{\phi^T})$ can be calculated but numerically. We will see next that in the case of Likelihood function, a similar condition will be imposed for the compactness of $\Phi^0$, and there will be no need for any numerical calculus.
\begin{conclusion}
\label{conc:conclusion2}
Using Propositions \ref{prop:NewRes} and \ref{prop:DecreaseDphi}, under condition (\ref{eqn:CondGaussMixNewDual1}, \ref{eqn:CondGaussMixNewDual2}) the sequence $(\hat{D}_{\varphi}(p_{\phi^k},p_{\phi^T}))_k$ defined through formula (\ref{eqn:NewDivergenceDef}) converges and there exists a subsequence $(\phi^{N(k)})$ which converges to a stationary point of the estimated divergence. Moreover, every limit point of the sequence $(\phi^k)_k$ is a stationary point of the estimated divergence. 
\end{conclusion}
\textbf{In the case of the likelihood $\varphi(t)=-\log(t)+t-1$}, the set $\Phi^0$ can be written as:
\begin{eqnarray*}
\Phi^0 & = & \left\{\phi\in\Phi, J(\phi)\geq J(\phi^0)\right\} \\
			 & = & J^{-1}\left([J(\phi^0),+\infty)\right),
\end{eqnarray*}
where $J$ is the log-likelihood function. Function $J$ is clearly of class $\mathcal{C}^1$ on (int($\Phi$)). We prove that $\Phi^0$ is closed and bounded which is sufficient to conclude its compactness, since the space $[\eta,1-\eta]^s\times\mathbb{R}^s$ provided with the euclidean distance is complete.\\ 
\textbf{Closedness.} The set $\Phi^0$ is the inverse image by a continuous function (the log-likelihood). Therefore it is closed in $[\eta,1-\eta]^s\times\mathbb{R}^s$.\\
\textbf{Boundedness.} By contradiction, suppose that $\Phi^0$ is unbounded, then there exists a sequence $(\phi^l)_l$ which tends to infinity. Since $\lambda^l\in[\eta,1-\eta]$, then either of $\mu_1^l$ or $\mu_2^l$ tends to infinity. Suppose that both $\mu_1^l$ and $\mu_2^l$ tend to infinity, we then have $J(\phi^l) \rightarrow -\infty$. Any finite initialization $\phi^0$ will imply that $J(\phi^0)>-\infty$ so that $\forall \phi\in\Phi^0, J(\phi)\geq J(\phi^0)>-\infty$. Thus, it is impossible for both $\mu_1^l$ and $\mu_2^l$ to go to infinity.\\
Suppose that $\mu_1^l \rightarrow \infty$, and that $\mu_2^l$ converges\footnote{Normally, $\mu_2^l$ is bounded; still, we can extract a subsequence which converges.} to $\mu2$. The limit of the likelihood has the form:
\[L(\lambda, \infty, \phi_2) = \prod_{i=1}^n{\frac{(1-\lambda)}{\sqrt{2\pi}}e^{-\frac{1}{2}(y_i-\mu_2)^2}},\]
which is bounded by its value for $\lambda = 0$ and $\mu_2 = \frac{1}{n}\sum_{i=1}^n{y_i}$. Indeed, since $1-\lambda\leq 1$, we have:
\[L(\lambda, \infty, \phi_2) \leq \prod_{i=1}^n{\frac{1}{\sqrt{2\pi}}e^{-\frac{1}{2}(y_i-\mu_2)^2}}.\]
The right hand side of this inequality is the likelihood of a Gaussian model $\mathcal{N}(\mu_2,0)$, so that it is maximized when $\mu_2=\frac{1}{n}\sum_{i=1}^n{y_i}$. Thus, if $\phi^0$ is chosen in a way that $J(\phi^0)>J\left(0,\infty,\frac{1}{n}\sum_{i=1}^n{y_i}\right)$, the case when $\mu_1$ tends to infinity and $\mu_2$ is bounded would never be allowed. For the other case where $\mu_2\rightarrow\infty$ and $\mu_1$ is bounded, we choose $\phi^0$ in a way that $J(\phi^0)>J\left(1,\frac{1}{n}\sum_{i=1}^n{y_i},\infty\right)$. In conclusion, with a choice of $\phi^0$ such that:
\begin{equation}
J(\phi^0)>\max\left[J\left(0,\infty,\frac{1}{n}\sum_{i=1}^n{y_i}\right),\; J\left(1,\frac{1}{n}\sum_{i=1}^n{y_i},\infty\right)\right]
\label{eqn:TwoGaussMixCond}
\end{equation}
the set $\Phi^0$ is bounded.\\
This condition on $\phi^0$ is very natural and means that we need to begin at a point at least better than the extreme cases where we only have one component in the mixture. This can be easily verified by choosing a random vector $\phi^0$, and calculate the corresponding log-likelihood value. If $J(\phi^0)$ does not verify the previous condition, we draw again another random vector until satisfaction.
\begin{conclusion}
\label{conc:conclusion3}
Using Propositions \ref{prop:NewRes} and \ref{prop:DecreaseDphi}, under condition (\ref{eqn:TwoGaussMixCond}) the sequence $(J(\phi^k))_k$ converges and there exists a subsequence $(\phi^{N(k)})$ which converges to a stationary point of the likelihood function. Moreover, every limit point of the sequence $(\phi^k)_k$ is a stationary point of the likelihood. 
\end{conclusion}
\textbf{Assumption A3 is not fulfilled} (this part applies for all aforementioned situations). As mentioned in the paper of \citep{Tseng}, for the two Gaussian mixture Example, by changing $\mu_1$ and $\mu_2$ by the same amount and suitably adjusting $\lambda$, the value of $h_i(x|\phi)$ would be unchanged. We explore this more thoroughly by writing the corresponding equations. Let's suppose, by absurd, that for distinct $\phi$ and $\phi'$ we have $D_{\psi}(\phi|\phi')=0$. By Definition of $D_{\psi}$, it is given by a sum of non negative terms, which implies that all terms need to be equal to zero. The following lines are equivalent $\forall i \in \{1,\cdots,n\}$:
\begin{eqnarray*}
h_i(0|\lambda,\mu_1,\mu_2) & = & h_i(0|\lambda',\mu'_1,\mu'_2) \\
\frac{\lambda e^{-\frac{1}{2}(y_i-\mu_1)^2}}{\lambda e^{-\frac{1}{2}(y_i-\mu_1)^2} + (1-\lambda) e^{-\frac{1}{2}(y_i-\mu_2)^2}} & = & \frac{\lambda' e^{-\frac{1}{2}(y_i-\mu'_1)^2}}{\lambda' e^{-\frac{1}{2}(y_i-\mu'_1)^2} + (1-\lambda') e^{-\frac{1}{2}(y_i-\mu'_2)^2}} \\
\log\left(\frac{1-\lambda}{\lambda}\right) - \frac{1}{2}(y_i-\mu_2)^2 + \frac{1}{2}(y_i-\mu_1)^2 & = & \log\left(\frac{1-\lambda'}{\lambda'}\right) - \frac{1}{2}(y_i-\mu'_2)^2 + \frac{1}{2}(y_i-\mu'_1)^2
\end{eqnarray*}
Looking at this set of $n$ equations as an equality of two polynomials on $y$ of degree 1 at $n$ points, we deduce that as we dispose of two distinct observations, say, $y_1$ and $y_2$, the two polynomials need to have the same coefficients. Thus the set of $n$ equations is equivalent to the following two equations:
\begin{equation}
\left\{\begin{array}{ccc}\mu_1-\mu_2 & = & \mu'_1-\mu'_2 \\
		\log\left(\frac{1-\lambda}{\lambda}\right) + \frac{1}{2}\mu_1^2 - \frac{1}{2}\mu_2^2 & = & \log\left(\frac{1-\lambda'}{\lambda'}\right) + \frac{1}{2}{\mu'_1}^2 - \frac{1}{2}{\mu'_2}^2	
		\end{array}\right.
\label{eqn:EqSysGaussMix}
\end{equation}
These two equations with three variables have an infinite number of solutions. Take for example $\mu_1 = 0,\mu_2=1,\lambda=\frac{2}{3},\mu'_1=\frac{1}{2}, \mu'_2=\frac{3}{2},\lambda'=\frac{1}{2}$. \\
\begin{remark} 
The previous conclusion can be extended to any two-component mixture of exponential families having the form:
\[p_{\phi}(y) = \lambda e^{\sum_{i=1}^{m_1}{\theta_{1,i}y^{i}} - F(\theta_1)} + (1-\lambda)e^{\sum_{i=1}^{m_2}{\theta_{2,i}y^{i}} - F(\theta_2)}.\]
One may write the corresponding $n$ equations. The polynomial of $y_i$ has a degree of at most $\max(m_1,m_2)$. Thus, if one disposes of $\max(m_1,m_2)+1$ distinct observations, the two polynomials will have the same set of coefficients. Finally, if $(\theta_1,\theta_2)\in\mathbb{R}^{d-1}$ with $d>\max(m_1,m_2)$, then assumption A3 does not hold.
\end{remark}
Unfortunately, we have no an information about the difference between consecutive terms $\|\phi^{k+1}-\phi^k\|$ except for the case of $\psi(t) = \varphi(t)=-\log(t)+t-1$ which corresponds to the classical EM recurrence:
\[\lambda^{k+1} = \frac{1}{n}\sum_{i=1}^n{h_i(0|\phi^k)},\quad \mu_1^{k+1} = \frac{\sum_{i=1}^n{y_ih_i(0|\phi^k)}}{\sum_{i=1}^n{h_i(0|\phi^k)}}\quad \mu_1^{k+1} = \frac{\sum_{i=1}^n{y_ih_i(1|\phi^k)}}{\sum_{i=1}^n{h_i(1|\phi^k)}}.\]
Tseng \cite{Tseng} has shown that we can prove directly that $\phi^{k+1}-\phi^k$ converges to 0.
% ===============================================================================
%-------------------------------------------------------------------
%%%%%%%%%%%%%%%%%%%%%%%%%%%%%%%%%%%%%%%%%%%%%%%%%%%%%%%%%%%%%%%%%%%%%%%%%%%%%%%%%%
%-------------------------------------------------------------------
% ===============================================================================
\section{Simulation Study}\label{sec:Simulations}
We summarize the results of 100 experiments on $100$-samples by giving the average of the estimates and the error committed, and the corresponding standard deviation. The criterion error is the total variation distance (TVD) which is calculated using the $L1$ distance. Indeed, the Scheff\'e lemma (see \cite{Meister} page 129.) states that:
\[\sup_{A\in\mathcal{B}_n(\mathbb{R})}\left|dP_{\phi}(A) - dP_{\phi^T}(A)\right| = \frac{1}{2}\int_{\mathbb{R}}{\left|p_{\phi}(y) - p_{\phi^T}(y)\right|dy}.\]
The TVD gives a measure of the maximum error we may commit when we use the estimated model in lieu of the true distribution. We consider the Hellinger divergence for estimators based on $\varphi-$divergences which corresponds to $\varphi(t)=\frac{1}{2}(\sqrt{t}-1)^2$. Our preference of the Hellinger divergence is that we hope to obtain robust estimators without loss of efficiency, see \cite{Jimenez}. $D_{\psi}$ is calculated with $\psi(t)=\frac{1}{2}(\sqrt{t}-1)^2$. The kernel-based MD$\varphi$DE is calculated using the Gaussian kernel, and the window is calculated using Silverman's rule. We included in the comparison the minimum density power divergence (MDPD) of \cite{BasuMPD}. The estimator is defined by:
\begin{eqnarray}
\hat{\phi}_n & = & \arginf_{\phi\in\Phi} \int{p_{\phi}^{1+a}}(z) dz - \frac{a+1}{a}\frac{1}{n}\sum_{i}^n{p_{\phi}^{a}(y_i)} \nonumber \\
 & = & \arginf_{\phi\in\Phi} \mathbb{E}_{P_{\phi}}\left[p_{\phi}^a\right] - \frac{a+1}{a}\mathbb{E}_{P_n}\left[p_{\phi}^{a}\right].
\label{eqn:MDPDdef}
\end{eqnarray}
where $a\in(0,1]$. This is a Bregman divergence and is known to have good efficiency and robustness for a good choice of the tradeoff parameter. According to the simulation results in \cite{Diaa}, the value of $a=0.5$ seems to give a good tradeoff between robustness against outliers and a good performance under the model. Notice that the MDPD coincides with MLE when $a$ tends to zero. Thus, our methodology presented here in this article, is applicable on this estimator and the proximal-point algorithm can be used to calculate the MDPD. The proximal term will be kept the same, i.e $\psi(t)=\frac{1}{2}(\sqrt{t}-1)^2$.\\
Simulations from two mixture models are given below; a Gaussian mixture and a Weibull mixture. The MLE for both mixtures was calculated using the EM algorithm. \\
Optimizations were carried out using the Nelder-Mead algorithm \cite{NelderMead} under the statistical tool R \cite{Rtool}. Numerical integrations in the Gaussian mixture were calculated using the \texttt{distrExIntegrate} function of package \texttt{distrEx}. It is a slight modification of the standard function \texttt{integrate}. It performs a Gauss-Legendre quadrature when function \texttt{integrate} returns an error. In the Weibull mixture, we used the \texttt{integral} function from package \texttt{pracma}\footnote{Function \texttt{integral} includes a variety of adaptive numerical integration methods such as Kronrod-Gauss quadrature, Romberg's method, Gauss-Richardson quadrature, Clenshaw-Curtis (not adaptive) and (adaptive) Simpson's method.}. Although being slow, it performs better than other functions even if the integrand has a relatively bad behavior.

%%%%%%%%%%%%%%%%%%%%%%%%%%%%%%%%%%%%%%%%%%%%%%%%%%%%%%%%%%%%%%%%%%%%%%%%%%%%%%

\subsection{The two-component Gaussian mixture revisited}\label{Example:DivergenceMixture}
We consider the Gaussian mixture (\ref{eqn:GaussMixModel}) presented earlier with true parameters $\lambda = 0.35, \mu_1 = -2, \mu_2 = 1.5$ and known variances equal to 1. Contamination was done by adding in the original sample to the 5 lowest values random observations from the uniform distribution $\mathcal{U}[-5,-2]$. We also added to the 5 largest values random observations from the uniform distribution $\mathcal{U}[2,5]$. Results are summarized in Table (\ref{tab:ErrGaussMix}). The EM algorithm was initialized according to condition (\ref{eqn:TwoGaussMixCond}). This condition gave good results when we are under the model whereas it did not result always in good estimates (the proportion converged towards 0 or 1) when outliers were added, and thus the EM algorithm was reinitialized manually. \\	
Figure (\ref{fig:DecreaseDivGaussChi2Chi2}) shows the values of the estimated divergence for both formulas (\ref{eqn:DivergenceDef}) and (\ref{eqn:NewDivergenceDef}) on a logarithmic scale at each iteration of the algorithm.
\begin{figure}[ht]
\centering
\includegraphics[scale = 0.5]{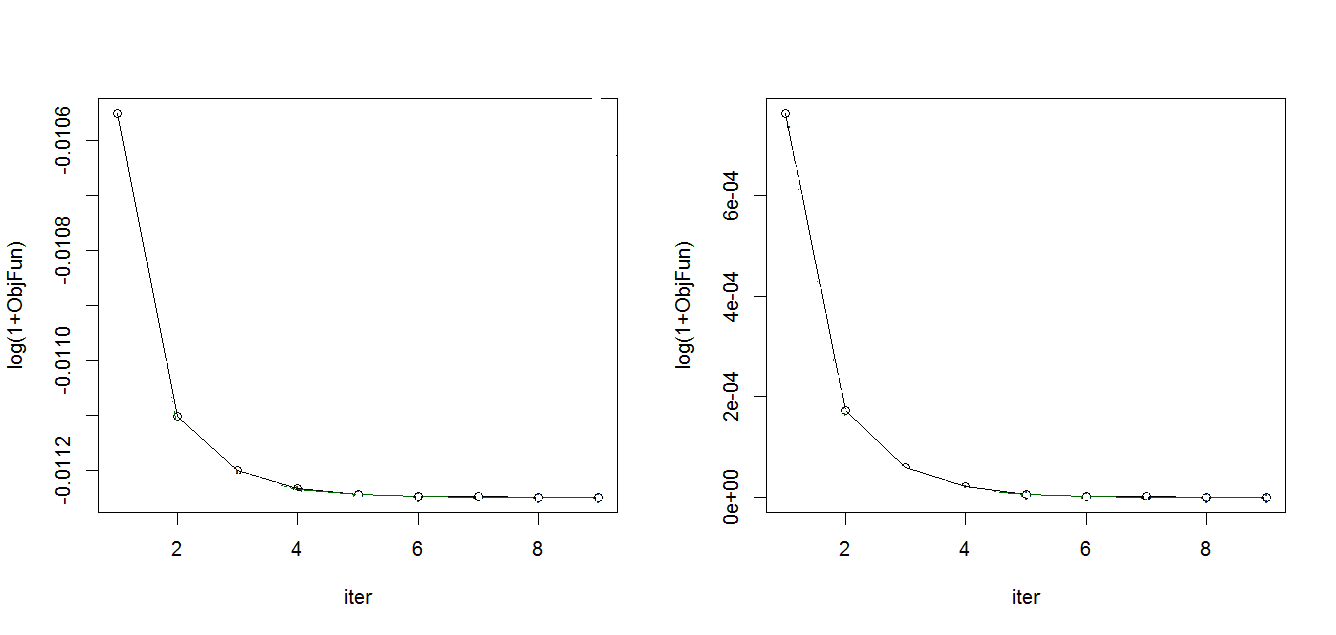}
\caption{Decrease of the (estimated) Hellinger divergence between the true density and the estimated model at each iteration in the Gaussian mixture. The figure to the left is the curve of the values of the kernel-based dual formula (\ref{eqn:NewDivergenceDef}). The figure to the right is the curve of values of the classical dual formula (\ref{eqn:DivergenceDef}). Values are taken at a logarithmic scale $\log(1+x)$.}
\label{fig:DecreaseDivGaussChi2Chi2}
\end{figure}

In what concerns our simulation results. The total variation of all four estimation methods is very close when we are under the model. When we added outliers, the classical MD$\varphi$DE was as sensitive as the maximum likelihood estimator. The error was doubled. Both the kernel-based MD$\varphi$DE and the MDPD are clearly robust since the total variation of these estimators under contamination has slightly increased.
\begin{table}[ht]
\centering
\caption{The mean and the standard deviation of the estimates and the errors committed in a 100-run experiment of a two-component gaussian mixture. The true set of parameters is $\lambda = 0.35, \mu_1 = -2, \mu_2 = 1.5$.}
\begin{tabular}{|l|c|c|c|c|c|c||c|c|}
\hline
 Estimation method & $\lambda$ & sd($\lambda$) & $\mu_1$ & sd($\mu_1$) & $\mu_2$ & sd($\mu_2$) & TVD & sd(TVD)\\
 \hline
 \hline
\multicolumn{9}{|c|}{Without Outliers} \\
\hline
\hline
Classical MD$\varphi$DE & 0.349 & 0.049 & -1.989 & 0.207 & 1.511 & 0.151 & 0.061 & 0.029\\
New MD$\varphi$DE - Silverman & 0.349 & 0.049 & -1.987 & 0.208 & 1.520 & 0.155 & 0.062 & 0.029\\
MDPD $a=0.5$ & 0.360 & 0.053 & -1.997 & 0.226 & 1.489 & 0.135 & 0.065 & 0.025 \\
EM (MLE) & 0.360 & 0.054 & -1.989 & 0.204 & 1.493 & 0.136 & 0.064 & 0.025\\
\hline
\hline
\multicolumn{9}{|c|}{With $10\%$ Outliers} \\
\hline
\hline
Classical MD$\varphi$DE & 0.357 & 0.022 & -2.629 & 0.094 & 1.734 & 0.111 & 0.146 & 0.034\\
New MD$\varphi$DE - Silverman & 0.352 & 0.057 & -1.756 & 0.224 & 1.358 & 0.132 & 0.087 & 0.033\\
MDPD $a=0.5$ & 0.364 & 0.056 & -1.819 & 0.218 & 1.404 & 0.132 & 0.078 & 0.030 \\
EM (MLE) & 0.342 & 0.064 & -2.617 & 0.288 & 1.713 & 0.172 & 0.150 & 0.034\\
\hline
\end{tabular}
\label{tab:ErrGaussMix}
\end{table}

%%%%%%%%%%%%%%%%%%%%%%%%%%%%%%%%%%%%%%%%%%%%%%%%%%%%%%%%%%%%%%%%%%%%%%%%%%%%%%%%%%%%%%%%%%%%%
%=====================================================
%%%%%%%%%%%%%%%%%%%%%%%%%%%%%%%%%%%%%%%%%%%%%%%%%%%%%%%%%%%%%%%%%%%%%%%%%%%%%%%%%%%%%%%%%%%%%

\subsection{The two-component Weibull mixture model}
We consider a two-component Weibull mixture with unknown shapes $\nu_1 = 1.2, \nu_2 = 2$ and a proportion $\lambda = 0.35$. The scales are known an equal to $\sigma_1=0.5, \sigma_2=2$. The desity function is given by:
\begin{equation}
p_{\phi}(x) = 2\lambda\alpha_1 (2x)^{\alpha_1-1} e^{-(2x)^{\alpha_1}}+(1-\lambda)\frac{\alpha_2}{2}\left(\frac{x}{2}\right)^{\alpha_2-1} e^{-\left(\frac{x}{2}\right)^{\alpha_2}}.
\label{eqn:WeibullMixture}
\end{equation}
Contamination was done by replacing 10 observations of each sample chosen randomly by 10 i.i.d. observations drawn from a Weibull distribution with shape $\nu = 0.9$ and scale $\sigma = 3$. Results are summarized in Table (\ref{tab:ErrWeibullMix}). Notice that it would have been better to use asymmetric kernels in order to build the kernel-based MD$\varphi$DE since their use in the context of positive-supported distributions is advised in order to reduce the bias at zero, see \cite{Diaa} for a detailed comparison with symmetric kernels. This is not however the goal of this paper, and besides, the use of symmetric kernels in this mixture model gave satisfactory results.\\
Simulations results in table \ref{tab:ErrWeibullMix} confirms once more the validity of our proximal-point algorithm and the clear robustness of both the kernel-based MD$\varphi$DE and the MDPD.
\begin{table}[ht]
\centering
\caption{The mean and the standard deviation of the estimates and the errors committed in a 100-run experiment of a two-component Weibull mixture. The true set of parameter is $\lambda = 0.35, \nu_1 = 1.2, \nu_2 = 2$.}
\begin{tabular}{|l|c|c|c|c|c|c||c|c|}
\hline
%\multirow{2}{2.5cm}{Estimation method} & \multicolumn{4}{|c||}{No Outliers} & \multicolumn{4}{|c|}{$10\%$ Outliers}\\
%\cline{2-9}
 Estimation method & $\lambda$ & sd($\lambda$) & $\nu_1$ & sd($\nu_1$) & $\nu_2$ & sd($\nu_2$) & TVD & sd(TVD)\\
 \hline
 \hline
\multicolumn{9}{|c|}{Without Outliers} \\
\hline
\hline
Classical MD$\varphi$DE & 0.356 & 0.066 & 1.245 & 0.228 & 2.055 & 0.237 & 0.052 & 0.025\\
New MD$\varphi$DE - Silverman & 0.387 & 0.067 & 1.229 & 0.241 & 2.145 & 0.289 & 0.058 & 0.029\\
MDPD $a=0.5$ & 0.354 & 0.068 & 1.238 & 0.230 & 2.071 & 0.345 & 0.056 & 0.029 \\
EM (MLE) & 0.355 & 0.066 & 1.245 & 0.228 & 2.054 & 0.237 & 0.052 & 0.025\\
\hline
\hline
\multicolumn{9}{|c|}{With $10\%$ Outliers} \\
\hline
\hline
Classical MD$\varphi$DE & 0.250 & 0.085 & 1.089 & 0.300 & 1.470 & 0.335 & 0.092 & 0.037\\
New MD$\varphi$DE - Silverman &  0.349 & 0.076 & 1.122 & 0.252 & 1.824 & 0.324 & 0.067 & 0.034\\
MDPD $a=0.5$ & 0.322 & 0.077 & 1.158 & 0.236 & 1.858 & 0.344 & 0.060 & 0.029 \\
EM (MLE) & 0.259  & 0.095 & 0.941 & 0.368 & 1.565 & 0.325 & 0.095 & 0.035\\
\hline
\end{tabular}
\label{tab:ErrWeibullMix}
\end{table}

%%%%%%%%%%%%%%%%%%%%%%%%%%%%%%%%%%%%%%%%%%%%%%%%%%%%%%%%%%%%%%%%
%%%%%%%%%%%%%%%%%%%%%%%%%%%%%%%%%%%%%%%%%%%%%%%%%%%%%%%%%%%%%%%%
%%%%%%%%%%%%%%%%%%%%%%%%%%%%%%%%%%%%%%%%%%%%%%%%%%%%%%%%%%%%%%%%%%%%%%%%%%%%%
% ================================================
%%%%%%%%%%%%%%%%%%%%%%%%%%%%%%%%%%%%%%%%%%%%%%%%%%%%%%%%%%%%%%%%%%%%%%%%%%%%%
\section{Conclusions and perspectives}
We introduced in this paper a proximal-point algorithm which permit to calculate divergence-based estimators. We studied the theoretical convergence of the algorithm and verified it on a two-component Gaussian mixture. We performed several simulations which confirmed that the algorithm works and is a way to calculate divergence-based estimators. We also applied our proximal algorithm on a Bregman divergence estimator (the MDPD), and the algorithm succeeded to produce the MDPD. Further investigations about the role of the proximal term are needed and may be considered in a future work.

\bibliographystyle{plainnat}
\bibliography{Biblio}
\end{document}